\documentclass[11pt]{elsarticle}

\usepackage[utf8]{inputenc}

\usepackage{amsmath, amsthm, amssymb}

\usepackage{graphicx}

\usepackage{natbib}

\usepackage{geometry}
\usepackage{setspace}
\usepackage{booktabs}
\usepackage{threeparttable}
\usepackage{longtable}
\usepackage{array}

\usepackage{pifont}
\usepackage{textcomp}
\usepackage{pdflscape}
\usepackage{xcolor}

\usepackage{hyperref}

\usepackage{algorithm}
\usepackage[noend]{algpseudocode}

\newtheorem{prop}{Proposition}

\begin{document}
\begin{frontmatter}

\title{Evaluating Predictive Modeling Strategies for Predicting Individual Treatment Effects in Precision Medicine}

 \author[1]{Pamela Solano}
 \author[2]{M Lee Van Horn}
 \author[1,3]{Thomas Jaki}

 \address[1]{Faculty for Informatics and Data Science, Regensburg University, Germany}
 \address[2]{University of New Mexico, USA}
 \address[3]{MRC Biostatistics Unit, University of Cambridge, UK}

\date{\today}

\begin{abstract}{A structured simulation framework is used to reveal novel differences in model accuracy, directionality, and robustness for predicting individual treatment effects.}

Precision medicine seeks to match patients with treatments that produce the greatest benefit. The Predicted Individual Treatment Effect (PITE)—the difference between predicted outcomes under treatment and control—quantifies this benefit but is difficult to estimate due to unobserved counterfactuals, high dimensionality, and complex interactions. We compared 30+ modeling strategies, including penalized and projection-based methods, flexible learners, and tree-ensembles, using a structured simulation framework varying sample size, dimensionality, multicollinearity, and interaction complexity. Performance was measured using root mean squared error (RMSE) for prediction accuracy and directional accuracy (DIR) for correctly classifying benefit versus harm. Internal validation produced optimistic estimates, whereas external validation with distributional shifts and higher-order interactions more clearly revealed model weaknesses. Penalized and projection-based approaches—ridge, lasso, elastic net, partial least squares (PLS), and principal components regression (PCR)—consistently achieved strong RMSE and DIR performance. Flexible learners excelled only under strong signals and sufficient sample sizes. Results highlight robust linear/projection defaults and the necessity of rigorous external validation.

\end{abstract}

\begin{keyword}
Precision medicine \sep Individual treatment effect \sep Heterogeneous treatment effect \sep Simulation-based evaluation \sep Regularized regression \sep Clinical decision support \sep High-dimensional data \sep Biomarkers \sep Model interpretability \sep Predictive model stability
\end{keyword}

\end{frontmatter}

\section{Introduction} \label{sec:intro}

The estimation of treatment effects has evolved beyond population-level summaries toward individualized approaches, in line with the goals of precision medicine. While randomized controlled trials (RCTs) have traditionally focused on the Average Treatment Effect (ATE), this metric assumes a homogeneous treatment response across individuals. In practice, however, outcomes often vary substantially due to biological, clinical, and sociodemographic factors, resulting in meaningful heterogeneity in treatment response \citep{Kent2018}.

To address this, the concept of Predicted Individual Treatment Effects (PITEs) has received increasing attention as a way to inform patient-specific treatment recommendations \cite{Lamont2016,Chang2021, Jaki2024}. Unlike the ATE, which summarizes average benefit across the population, the PITE estimates the expected difference in predicted outcomes between treatment and control for a given individual, conditional on their covariates. This shift from average to individualized effect estimation represents a fundamental advance in both causal inference and clinical decision-making \citep{Angrist2004, Hoogland2024}. 

Formally, under the PITE framework, let $f_T (x_i)$ denote a predictive function based on the covariate vector $x_i \in \mathbb{R}^p$ for individual $i$, and let $T_i \in \{t, c\}$ be a binary indicator, where $t$ denotes the treatment arm and $c$ control arm. The PITE for subject $i$ is defined as:

\begin{equation}\nonumber
\widehat{\operatorname{PITE}_i} = \hat{f}_t(x_i) - \hat{f}_c(x_i), \quad 1 \leq i \leq n,
\end{equation}
where $\hat{f}_t(\cdot)$ and $\hat{f}_c(\cdot)$ are predicted models fitted separately to the treatment and control groups, respectively. 

Under this construction, assuming that the first two moments of these predictor functions are defined, the variance of $\widehat{\mathrm{PITE}}_i$ is

$$\mathrm{Var(PITE_i)} = \operatorname{Var}{(\hat f_t(x_i))}+ \operatorname{Var}{(\hat f_c(x_i))} 
$$ 
where the covariance term vanishes, i.e.,  $\operatorname{Cov}(\hat{f}_t, \hat{f}_c) = 0$, since the models for the two arms are estimated independently.

A wide range of methods can be used to estimate PITE, motivated only by rendering uncorrelated $\hat{f}_t$ and $\hat{f}_c$. However, external challenges appear, such as dimensionality, heterogeneity, nonlinear outcomes. Data-adaptive techniques such as penalized regression \citep{Hastie-Zou2005}, tree-based ensembles \citep{Breiman2001, Wager2018}, and Bayesian approaches \citep{Chipman2010, Hill2011} have become increasingly popular, particularly in early-phase trials and real-world studies where sample sizes are limited and covariate structures are complex. In the PITE setting, prior work \cite{Ballarini2018} has highlighted several challenges. Ultimately, defining what constitutes good performance for PITE estimation is highly context-dependent, and no single approach can be considered universally optimal. 

Despite substantial methodological progress, accurately estimating PITEs remains dificult, with little consensus on which methods or assumptions are most reliable. A key difficulty is that models performing well in outcome prediction do not necessarily provide valid or stable estimates of treatment effect. Minimizing predictive loss (e.g., mean squared error for observed outcomes) does not directly minimize the error in treatment effects estimation \citep{Dorie2016}. Analogous reasoning applies to PITE estimation, as demonstrated in Proposition \ref{prop:mse}.

A common but misleading assumption is that if the potential outcomes under treatment and control are accurately estimated, the resulting PITE estimates
must also be accurate. While intuitive, this assumption fails both algebraically and empirically. Because PITE is defined as a contrast between two predictors, the errors from each model can interact in unpredictable ways—magnifying noise, introducing bias, or distorting calibration. Several studies \citep{Rolling2014, Zhao2017, Efthimiou2023} have illustrated that even when outcome models perform well individually, the derived treatment effect can be unstable, biased, or poorly calibrated. Importantly, none of the standard performance metrics for PITE (e.g., RMSE, MAE, R², calibration slope) can be directly inferred from outcome-level performance measures. 

We formalize this distinction within a general MSE-based framework and show that models achieving high predictive accuracy for the outcome $Y$ may nonetheless yield unstable or misleading PITEs (Section~\ref{sec:diag}). 

In light of these concerns, this paper does not aim to promote a single best method. Instead, we offer a principled and practical guide to the reliable estimation of PITEs, with a focus on identifying the statistical properties, limitations, and trade-offs that arise when modeling treatment effects in clinical trial settings. We emphasize understanding how estimation choices influence variability, bias, calibration, and decision relevance in clinical trial settings.

To support this aim, we conduct a comprehensive and structured evaluation of over 30 representative methods across five model families: (i) regularization-based models (e.g., Lasso, Elastic Net, Bayesian shrinkage), (ii) tree-based ensemble learners (e.g., Random Forest, Gradient Boosting, BART \citep{Hill2011}), (iii) Bayesian learning approaches (e.g., spike-and-slab priors, Bayesian Neural Networks), (iv) traditional parametric models (e.g., GLMs with interaction terms), and (v) nonlinear and kernel-based methods (e.g., neural networks, KRLS, PPR).

Despite increasing enthusiasm for individualized treatment effects inference, fundamental scientific questions remain unresolved regarding the reliable estimation and validation of PITEs. How do modeling choices—such as regularization, projection, or nonparametric flexibility—interact with sample size, dimensionality, and treatment effect heterogeneity to influence bias, variance, and calibration of the resulting estimates? Under what data conditions do flexible learners, including tree ensembles and Bayesian additive models, genuinely outperform traditional penalized or projection-based approaches, and when do they instead risk overfitting or instability? Equally important, how should predictive performance be quantified when the estimand itself (the counterfactual contrast) is unobservable, and to what extent do existing metrics, such as RMSE and directional accuracy, capture clinically meaningful reliability?

This paper addresses these questions through a systematic, simulation-based evaluation of over thirty contemporary modeling strategies within the PITE framework. Our objective is not merely to rank methods by performance but to uncover the principles that govern when and why certain models succeed or fail in representing individualized treatment effects. By explicitly linking model adaptability to patient-level complexity, we aim to clarify the conditions under which variation in predicted treatment effects across patients translates into actionable clinical insight. By explicitly linking model adaptability to patient-level complexity, we aim to clarify the conditions under which variation in predicted treatment effects across patients translates into actionable clinical insight. These questions form the foundation of a broader agenda: developing robust, interpretable\footnote{Interpretability refers to transparency in the model’s statistical structure—its parameterization, shrinkage behavior, and variance–bias tradeoff characteristics—rather than interpretability of the predicted individual treatment effects themselves.} and generalizable tools for predicting individual treatment effects in precision medicine.

To address these questions, we designed a comprehensive and rigorous simulation framework that mirrors the complexities of real-world clinical data. The evaluation spans scenarios with varying degrees of treatment effect heterogeneity, high-dimensional and correlated covariates, limited sample sizes, and deliberate model misspecification—conditions under which traditional estimators often falter. Each modeling strategy is examined not only for its predictive accuracy but also for its inferential stability and clinical interpretability. Performance is quantified using complementary metrics that capture distinct dimensions of reliability: the expected mean squared error (MSE) of the PITE for overall estimation quality, calibration and adjusted $R^2$ for model fit, mean absolute error (MAE) for robustness to outliers, and a directional accuracy criterion assessing whether models correctly classify treatment benefit versus harm, independent of effect magnitude. 

We recognize that direct validation of counterfactuals is impossible, thus we also emphasize the importance of robust performance evaluation techniques. These include matching-based validation frameworks \citep{Gao2021, Kuhlemeier_2024} and metrics assessing calibration and discrimination \citep{VanCalster2018, Hoogland2024}, which are critical for regulatory, clinical, and external transportability assessments \citep{Steyerberg2019}. Furthermore, we address common real-world complexities—such as missing data, outliers, and complex biomarker interactions—using tools like multiple imputation \citep{Lamont2016}, Bayesian data augmentation, and regularization techniques that help stabilize PITE estimates in high-dimensional settings \citep{Ballarini2018}.

This work contributes in three primary ways. First, it classifies and contextualizes the diverse set of estimation strategies available for PITEs, linking them to their underlying assumptions and use cases. Second, it provides empirical insights into how estimation strategies perform under different data-generating conditions. Third, it delivers practical recommendations and cautionary guidance, including ``do'' and ``avoid'' principles that emphasize when outcome-focused loss functions (e.g., MSE) may lead to invalid or unstable treatment effect estimates.

\section{Methodology}

\subsection{Evaluated Approaches}
We evaluated a broad and representative family of models for estimating predicted individual treatment effects (PITEs), spanning classical statistical approaches, modern machine learning, and Bayesian methods. Linear and regularized linear models—including Ordinary Least Squares \citep{seber2004linear}, Generalized Linear Models (GLMs) \citep{mccullagh1989generalized}, Lasso \citep{Tibshirani1996}, Ridge \citep{Hoerl2000}, Elastic Net \citep{Hastie-Zou2005}, and Principal Component Regression \citep{Jolliffe_PCA2002}—offer a well-established foundation, particularly valued for interpretability and computational efficiency. These models are effective in low to moderate dimensions but may struggle with nonlinearities and complex feature interactions. To address multicollinearity and high-dimensionality, we incorporated penalized and projection-based approaches such as Partial Least Squares \citep{Wold1987}, stepwise selection \citep{Efron2004}, and non-negative constraints via Non-Negative Least Squares \citep{Lawson1995}. These techniques allow for parsimonious modeling and feature selection while maintaining interpretability.

To model nonlinear and interaction-rich relationships, we included tree-based and ensemble learners (e.g., Decision Trees, Random Forests \citep{Breiman2001}, Boosted Trees \citep{Friedman2001}, Quantile Forests \citep{meinshausen2006quantile}), which are particularly effective for high-dimensional data without requiring strict parametric assumptions. Bayesian variants such as Bayesian Additive Regression Trees (BART) \citep{Chipman2010} and Spike-and-Slab regression \citep{Brown1998} enable uncertainty quantification and adaptive regularization, while neural networks—including Bayesian Neural Networks \citep{MacKay1992}—capture complex patterns and offer robust predictive performance, albeit at higher computational cost. We also considered flexible non-parametric and kernel-based methods, such as Kernel Ridge Regression \citep{tipping2001} and Projection Pursuit Regression \citep{Friedman1981}, as well as constrained and latent variable models like Quantile Regression \citep{Koenker2005} and Supervised PCA \citep{Bair2006}. These models bring strengths in capturing distributional nuances and handling challenging data structures. Altogether, this modeling spectrum reflects a balance between flexibility, interpretability, and computational tractability, suitable for diverse clinical data scenarios encountered in PITE estimation. Further detail and abbreviations for names of this methods are is provided in Supplemental Material, Section~\ref{app: xapproach}. 

Before defining our PITE evaluation metrics, we first examined them from the perspective of \textit{explained uncertainty} and \textit{interpretability}. This analysis highlights the complementary roles of \textbf{RMSE} and \textbf{Direction}, which are the diagnostic metrics adopted in our main analysis. 

\subsection{Diagnostic Predictive Performance}\label{sec:diag}

We assess the predictive performance of \emph{PITE estimates} using two complementary metrics that capture different aspects of accuracy and clinical utility.

The first is the \textbf{Root Mean Squared Error (RMSE)}, which summarizes the overall discrepancy between predicted and true (unobserved) predicted individual treatment effects:
\begin{equation}
\mathrm{RMSE} = \sqrt{\mathbb{E}\left[(\widehat{PITE}(x_i) - PITE(x_i))^2\right]}.
\end{equation}

RMSE reflects both variance and bias in the PITE estimates and serves as a global measure of estimation accuracy.

The second metric, referred to as \textbf{Direction}, quantifies the ability to correctly identify the sign of the treatment effects for each individual:
\begin{equation}\label{eq:direction}
\mathrm{Direction} = \frac{1}{n} \sum_{i=1}^n \mathbb{I}\!\left(\mathrm{sign}(\widehat{PITE}(x_i)) = \mathrm{sign}(PITE(x_i))\right).
\end{equation}

where $\mathbb{I}(\cdot)$ denotes the indicator function. This metric measures agreement in effect direction rather than magnitude, which is particularly important in decision-making contexts such as precision medicine, where identifying whether a treatment is beneficial or harmful may outweigh precise effect estimation.

While both metrics are intuitive, their behavior is best understood through a formal decomposition of the PITE error. Specifically, the Mean Squared Error of PITE can be expressed in terms of the MSE associated with the treatment and control outcome models:

\begin{prop}\label{prop:mse} Assume that $\hat{f}_t(\cdot)$ and $\hat{f}_c(\cdot)$ are predictors for which an expected value and variance can be well defined. Then the PITE Mean Squared Error ($\operatorname{MSE}_{\mathrm{PITE}}$) can be decomposed as 

\begin{equation}
\operatorname{MSE}_{\operatorname{PITE}} = \operatorname{MSE}_{\operatorname{t}} + \operatorname{MSE}_{\operatorname{c}} - 2\,\mathrm{bias}_t \cdot \mathrm{bias}_c,
\end{equation}
 % - 2\, \operatorname{Cov}[\hat{f}_1(x), \hat{f}_0(x)]
where $\operatorname{MSE}_{\operatorname{t}} = \mathrm{Var}[\hat{f}_t(X)] + \mathrm{bias}_t^2$ and $\operatorname{MSE}_{\operatorname{c}} = \mathrm{Var}[\hat{f}_c(X)] + \mathrm{bias}_c^2$ denote the mean squared errors for the treatment and control models, respectively, with $\mathrm{bias}_t = \mathbb{E}[\hat{f}_t(X)] - \mathbb{E}[Y(t)]$ for treatment and $\mathrm{bias}_c = \mathbb{E}[\hat{f}_c(X)] - \mathbb{E}[Y(c)]$ for control.
\end{prop}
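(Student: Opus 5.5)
The plan is a bias--variance decomposition of the PITE error at a fixed covariate value. We treat $x$ as fixed, so the randomness in $\hat f_t(x)$ and $\hat f_c(x)$ comes from the training samples of the two arms. We interpret the target $\operatorname{PITE}(x)=\mathbb{E}[Y(t)\mid x]-\mathbb{E}[Y(c)\mid x]$ as a deterministic quantity. Writing $\mu_t=\mathbb{E}[Y(t)]$ and $\mu_c=\mathbb{E}[Y(c)]$ (conditional on $x$), we have
\[
\widehat{\operatorname{PITE}}-\operatorname{PITE}=(\hat f_t-\mu_t)-(\hat f_c-\mu_c),
\]
and we expand
\[
\operatorname{MSE}_{\operatorname{PITE}}=\mathbb{E}\bigl[(\hat f_t-\mu_t)^2\bigr]+\mathbb{E}\bigl[(\hat f_c-\mu_c)^2\bigr]-2\,\mathbb{E}\bigl[(\hat f_t-\mu_t)(\hat f_c-\mu_c)\bigr].
\]
The assumption that first and second moments exist makes every term finite. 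By the standard decomposition $\mathbb{E}[(\hat f-\mu)^2]=\mathrm{Var}[\hat f]+(\mathbb{E}\hat f-\mu)^2$, the first two terms are exactly $\operatorname{MSE}_t$ and $\operatorname{MSE}_c$ as defined in the statement.

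The key step is the cross term. We write $\hat f_t-\mu_t=(\hat f_t-\mathbb{E}\hat f_t)+\mathrm{bias}_t$, and similarly for the control arm. This gives
\[
\mathbb{E}\bigl[(\hat f_t-\mu_t)(\hat f_c-\mu_c)\bigr]=\operatorname{Cov}(\hat f_t,\hat f_c)+\mathrm{bias}_t\,\mathrm{bias}_c,
\]
because the mixed terms have mean zero. We then invoke the setup from the introduction: $\hat f_t$ and $\hat f_c$ are fitted separately on disjoint, independent samples from the two arms of a randomized trial, so $\operatorname{Cov}(\hat f_t,\hat f_c)=0$. This is the same fact used earlier to derive $\mathrm{Var}(\mathrm{PITE}_i)$. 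The cross term therefore reduces to $\mathrm{bias}_t\,\mathrm{bias}_c$, which yields the claimed identity.

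The main obstacle is interpretive rather than computational. The statement writes $\mathrm{Var}[\hat f_t(X)]$ and $\mathbb{E}[Y(t)]$ with a capital $X$, so we must fix the probability space consistently in both arms.

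\begin{itemize}
\item \textbf{Pointwise reading.} Conditional on $X=x$, with randomness over the training data, independence of the arms gives the covariance-free result directly.
\item \textbf{Integrated reading.} If $X$ is also random, $\hat f_t(X)$ and $\hat f_c(X)$ share the same $X$ and are generally correlated even with independent training sets. The clean statement then requires either conditioning on $X$ first and averaging the pointwise identity, in which case the product term becomes $\mathbb{E}_X[\mathrm{bias}_t(X)\,\mathrm{bias}_c(X)]$, or explicitly assuming zero covariance as the paper does.
\end{itemize}

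I would adopt the pointwise-conditional version and remark that averaging over $X$ preserves the form with the biases read pointwise.
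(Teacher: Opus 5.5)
Your argument is correct and is essentially the paper's proof. The paper writes $\operatorname{MSE}_{\mathrm{PITE}} = \operatorname{Var}[\widehat{\mathrm{PITE}}(x)] + (\mathrm{bias}_t - \mathrm{bias}_c)^2$, with $\operatorname{Var}[\widehat{\mathrm{PITE}}(x)] = \operatorname{Var}[\hat f_t(x)] + \operatorname{Var}[\hat f_c(x)] - 2\operatorname{Cov}[\hat f_t(x),\hat f_c(x)]$, and then expands; this is the same algebra as expanding the squared error first and splitting each term. Your explicit use of $\operatorname{Cov}(\hat f_t,\hat f_c)=0$ from independently fitted arms, together with your pointwise-in-$x$ reading, spells out the assumption that the paper's ``expanding and rearranging'' step leaves unstated.
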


This proposition formalizes the intuition that minimizing predictive loss for the outcome models $\hat{f}_t$ and $\hat{f}_c$ does not necessarily minimize the error in PITE estimation. This decomposition highlights that PITE accuracy depends not only on the individual predictive performance of $\hat{f}_t$ and $\hat{f}_c$, but also on the interaction of their biases. In particular, biases in opposite directions can compound and amplify errors in PITE estimates, underscoring the need for diagnostics that account for the joint behavior of the two outcome models.

For completeness, we also evaluated additional performance metrics to complement the primary error measures. Let $\text{PITE}_i$ denote the true predicted individual treatment effect for subject $i$, and $\hat{\text{PITE}}_i$ its corresponding estimate obtained from a given model. 

The \textbf{Mean Absolute Error (MAE)} summarizes the average absolute deviation between the estimated and true PITEs:
\[
\mathrm{MAE}_{\text{PITE}} = \frac{1}{n} \sum_{i=1}^n \big|\hat{\text{PITE}}_i - \text{PITE}_i\big|.
\]

The \textbf{Coefficient of Determination} ($R^2$) quantifies the proportion of variation in the true PITE values explained by the model estimates:
\[
R^2_{\text{PITE}} = 1 - 
\frac{\sum_{i=1}^n (\text{PITE}_i - \hat{\text{PITE}}_i)^2}
{\sum_{i=1}^n (\text{PITE}_i - \bar{\text{PITE}})^2},
\]
where $\bar{\text{PITE}}$ is the sample mean of the true PITE values.

Finally, we assessed \textbf{calibration} using a simple linear regression of the true effects on the estimated ones:
\[
\text{PITE}_i = \alpha + \beta\,\hat{\text{PITE}}_i + \eta_i,
\]
where $\alpha$ and $\beta$ denote the intercept and slope, respectively. Perfect calibration corresponds to $\alpha = 0$ and $\beta = 1$, indicating that estimated effects are unbiased in both level and scale. Deviations from these values reflect systematic bias (nonzero $\alpha$) or miscalibration in effect magnitude (slope $\beta \neq 1$).

Detailed definitions and decompositions of these metrics are provided in Supplemental Material Section~\ref{app: metrics}, along with an empirical justification for our focus on RMSE and Direction as primary evaluation criteria rather than the other metrics.

\section{Simulation Study}\label{sec:simulation}

\subsection{Simulation Design}

We generate synthetic data to evaluate the performance of the proposed PITE framework under controlled conditions. For each replication, we simulate $N$ independent subjects with $p$-dimensional covariate vectors $X_i = (X_{i1}, \dots, X_{ip})^\top$ drawn from a multivariate normal distribution:
\[
X_i \sim \mathcal{N}_p(0, \Sigma), \qquad i = 1,\dots,N,
\]
where $\Sigma$ is a $p \times p$ covariance matrix specifying the correlation structure among covariates. Unless otherwise stated, $\Sigma$ is set to $\rho I_p$ for simplicity, where $\rho$ controls the correlation magnitude.

Treatment assignment is independent of covariates and is allocated to half of the subjects, where $T_i = 1$ indicates assignment to the treatment arm and $T_i = 0$ to the control arm.

We specify the potential outcomes under control and treatment as follows. For each subject, the control mean response is given by
\[
f_0(X_i) = X_i^\top \beta_0,
\]
where $\beta_0 \in \mathbb{R}^p$ is a vector of baseline coefficients drawn from $\mathcal{N}(0, 0.1^2)$.

The individual treatment effect is modeled as
\[
\Delta(X_i) = X_i^\top \beta_\Delta,
\]
where $\beta_\Delta \in \mathbb{R}^p$ is a vector of treatment effect coefficients that determine how covariates modulate the individual treatment benefit. In our simulations, $\beta_\Delta$ is drawn from
\[
\beta_\Delta \sim \mathcal{N}(\mu_{\beta_\Delta}, 0.01^2 I_p),
\]
where $\mu_{\beta_\Delta}$ specifies the mean of the treatment effect coefficients. The small variance $0.01^2$ induces only mild heterogeneity, ensuring that differences in PITE estimates are primarily driven by changes in the mean treatment effect rather than high variability across individuals. This construction allows the expected individual benefit to mantain modest, symmetric perturbations around the target effect.

The observed outcome is generated from the potential outcomes framework with additive Gaussian noise:
\[
Y_i(c) = f_c(X_i) + \varepsilon_{ci}, \qquad Y_i(t) = Y_i(c) + \Delta(X_i),
\]
where $\varepsilon_{ci} \sim \mathcal{N}(0,1)$. Finally, the observed response is
\[
Y_i^{\mathrm{obs}} = Y_i(c) + T_i \cdot \Delta(X_i).
\]

The simulated dataset therefore includes covariates $(X_i)$, treatment indicator $(T_i)$, observed outcome $(Y_i^{\mathrm{obs}})$, and the true individual expected treatment effect $\Delta(X_i)$, which serves as the ground truth for evaluation. Simulations were executed in parallel across 27 computational threads replicated 135 times. Details regarding software and model implementation are provided in Supplemental Material Section~\ref{app:sim_details}. The exact algorithm is detailed in Algorithm \ref{alg:sim_external_validation}.

%Specifically, we vary sample sizes, numbers of features, interactions and correlations among covariates affecting the benefit component,  across tree levels of expected treatment effect (ete): 0 (low), 0.25 (moderate), and 0.5 (high) and testing under internal and external validation strategy were evaluated. 

%Models were considered successful when they exhibited both high directional accuracy (DIR) and low {Root Mean Squared Error (RMSE)}.

\begin{algorithm}[!htbp]
\caption{Simulation–Validation–Matching Pipeline for PITE Evaluation}
\label{alg:sim_external_validation}
\begin{algorithmic}[1]
\Statex \textbf{Input:} Simulation parameters $\{n, p, \text{ete}\}$; list of predictive models $\mathcal{M} = \{m_1, \dots, m_K\}$; data generator $\texttt{sim\_data()}$; number of CV folds $=10$.
\Statex \textbf{Output:} Estimated PITEs $\widehat{\Delta}_{ij}$, observed PITEs $O_{ij}$, and true benefits for training and validation populations.

\vspace{2mm}
\Statex \textbf{Step 1: Data Generation (Training Dataset)}
\State Simulate the training dataset $\mathcal{D}_{\text{train}} = \{(\boldsymbol{X}_i, T_i, Y_i)\}_{i=1}^{n}$ using $\texttt{sim\_data}(p, n, \text{ete})$.
\State Partition $\mathcal{D}_{\text{train}}$ into two treatment subsets: 
$$\mathcal{D}_{0} = \{(\boldsymbol{X}_i, Y_i): T_i = 0\}  \mbox{ and }  \mathcal{D}_{1} = \{(\boldsymbol{X}_i, Y_i): T_i = 1\}$$

\Statex \textbf{Step 2: External Dataset (Generation)}  
\State Independently generate a new dataset $\mathcal{D}_{\text{val}} = \{(\boldsymbol{X}^*_j, T^*_j, Y^*_j)\}_{j=1}^{n}$ using the same data-generating mechanism but with a different random seed, thereby producing an \emph{independent external population} that shares the same structural parameters but distinct realizations of covariates, $\mu_{\beta_{\Delta}}$, treatment assignments, and outcomes.  
\State Partition $\mathcal{D}_{\text{val}}$ into treatment ($T^* = 1$) and control ($T^* = 0$) subgroups.

\vspace{2mm}
\Statex \textbf{Step 3: Matching Procedure on External Validation Dataset for Prediction}
\State Apply nearest-neighbor matching using Mahalanobis distance between treated and control subjects in $\mathcal{D}_{\text{val}}$ via \texttt{MatchIt}.
\State Obtain index sets $(\mathcal{I}_A, \mathcal{I}_B)$ corresponding to matched pairs: $$\mathcal{I}_A = \{i: T^*_i = 0\} \mbox{ and } \mathcal{I}_B = \{j: T^*_j = 1\}$$
\State These indices are used to compute pairwise observed and predicted treatment-effect differences between matched subjects.

\vspace{2mm}
\Statex \textbf{Step 4: Model Training and Prediction}
\State Each model $m_k \in \mathcal{M}$ is trained separately on $\mathcal{D}_{0}$ and $\mathcal{D}_{1}$ to predict potential outcomes under control and treatment, respectively.
    \[
    \widehat{f}_{0}^{(k)} = m_k(Y \sim \boldsymbol{X}, \mathcal{D}_0), \quad 
    \widehat{f}_{1}^{(k)} = m_k(Y \sim \boldsymbol{X}, \mathcal{D}_1).
    \]
\State Use 10-fold cross-validation for models.%; method-specific hyperparameters are used for others (e.g., BART: \texttt{nskip=50}, \texttt{ndpost=200}; Bayenet: \texttt{max.steps=500}, \texttt{penalty="lasso"}).
 \State Predict potential outcomes for each validation subject following \textbf{Step 3}:
    \[
    \widehat{Y}_{0j}^{(k)} = \widehat{f}_{0}^{(k)}(\boldsymbol{X}^*_j), \quad 
    \widehat{Y}_{1j}^{(k)} = \widehat{f}_{1}^{(k)}(\boldsymbol{X}^*_j).
    \]
\State Compute estimated PITEs for matched pairs:
    \[
    \widehat{\Delta}_{ij}^{(k)} = \widehat{Y}_{1,\mathcal{I}_B}^{(k)} - \widehat{Y}_{0,\mathcal{I}_A}^{(k)}.
    \]
\State Compute observed PITEs:
    \[
    O_{ij} = Y^*_{\mathcal{I}_B} - Y^*_{\mathcal{I}_A}.
    \]
\State Store $\widehat{\Delta}_{ij}^{(k)}$ and $O_{ij}$ for later evaluation.

\vspace{2mm}
\Statex \textbf{Remarks:}
\begin{itemize}
    \item Training and validation are performed on independent simulated populations (step applied only for external validation).
    \item Matching is applied only to the validation dataset to form comparable treated–control pairs for evaluation.
\end{itemize}

\end{algorithmic}
\end{algorithm}

\subsection{Internal Validation}\label{sec:successfull_int}

Covariates were sampled from a multivariate normal distribution with a specified correlation structure to represent different levels of multicollinearity: low ($\rho = 0$), moderate ($\rho = 0.5$), and high ($\rho = 0.95$). Treatment assignment was independent of covariates and half of the subjects were assigned to the treatment group and half to the control group. Outcomes were generated as linear functions of the covariates with additive Gaussian noise. 

Across simulation scenarios, we varied the mean of the treatment effect coefficients ($\mu_{\beta_\Delta} \in \{0, 0.25, 0.5\}$), the number of covariates ($p \in \{5, 15, 45\}$), and the sample size ($n \in \{250, 500, 750\}$).

Within a single population, we generated both training and test datasets by splitting the sample in a 50:50 ratio, following the approach of \citet{Rolling2014}. Predictive models were then trained on the training set and evaluated on the test set.

%%%%% general overview %%%%%%%%%
Under internal validation with correlated covariates, varying sample sizes, and a range of treatment effects, model performance was generally strong and more stable compared to external validation with interactions. The average RMSE was around (mean = 0.88, median = 0.51) in the range of (0.03, 22), and directional accuracy (DIR) was high overall (mean = 0.76, median = 0.84), with several models achieving near-perfect directionality (DIR $> 0.95$). 

\begin{figure}[h!]
    \centering
    \includegraphics[width=1\linewidth]{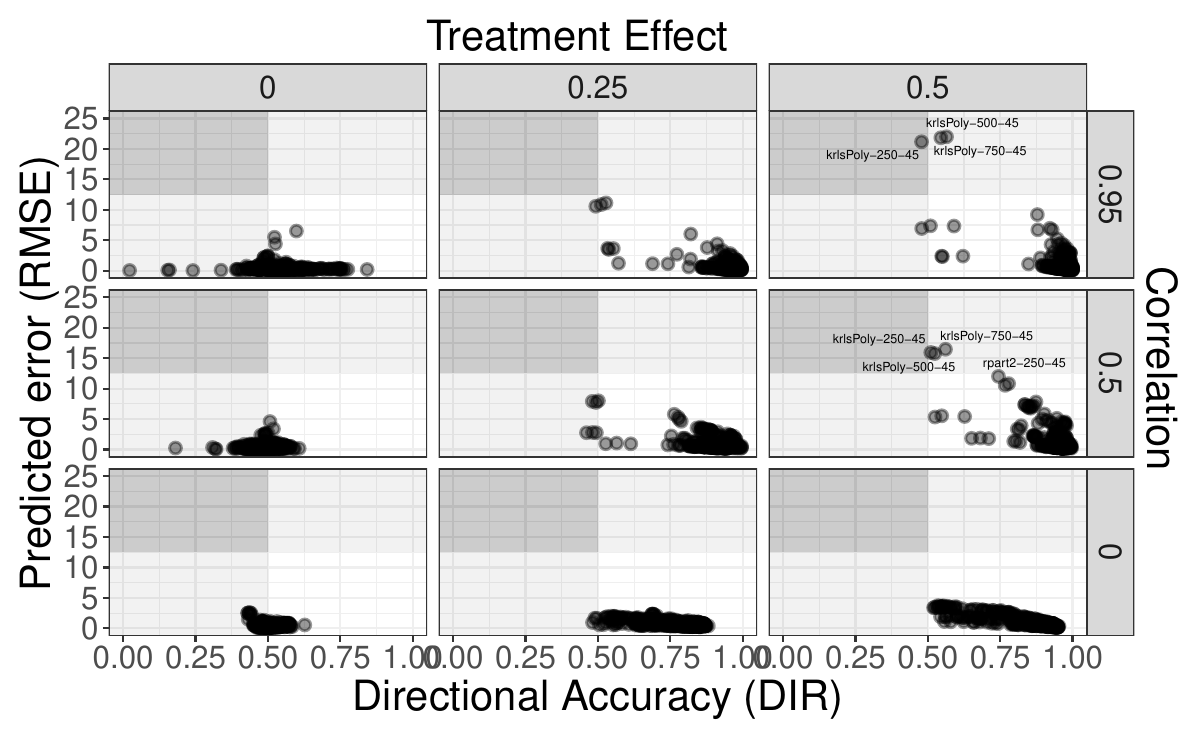}
    \caption{Internal validation: Relationship between predictive performance Root Mean Squared Error (RMSE) and Directional Accuracy (DIR) across models and Treatment Effect impact and correlation value between covariates. Each point represents a specific model-parameter configuration, labeled by model name, sample size (n), and number of predictors (p). Darkness gray areas reflect undesirable regions}
    \label{fig:internal_risk_dir_cor}
\end{figure}

%In the internal validation setting with correlated structures among covariates, 
Figure~\ref{fig:internal_risk_dir_cor} reveals clear performance variability across models, measured by RMSE and DIR. Several models, notably krlsPoly and rpart2, were repeatedly located within the grey danger zones, indicating high RMSE and low DIR. This pattern was especially pronounced in scenarios with high-dimensional covariates ($p = 45$) and low or null treatment effects, where predictive accuracy deteriorated and directional agreement failed to exceed acceptable thresholds ($\operatorname{DIR} < 0.6$). Conversely, well-performing models remained outside these danger areas, achieving better alignment between estimated and true individual treatment effects.

% 
%%%%% worst cases %%%%%%%%%

%   dplyr::filter(RMSE>=5 & DIR <= 0.8)%>% 

Even under internal validation with correlated covariates---methods demonstrated substantial predictive error and poor directional reliability in estimating PITE. Notably, krlsPoly and rpart2 frequently appeared among the worst performers, with RMSE values reaching up to 22.0, MAE exceeding 17.5, the mayimum R² achieves 79.7\%, poor $\alpha$ \& $\beta$ coverage and DIR consistently near or below 0.55. These failures were most prominent in high-dimensional settings ($p = 45$) and especially under moderate to high correlation ($\rho = 0.5–0.95$), across all treatment effect levels, Table~\ref{tab:internal_ext_worst_corr}. 

\newpage
% latex table generated in R 4.3.1 by xtable 1.8-4 package
% Thu Aug  7 18:27:16 2025
%\begin{table}[ht]
\begin{table}[H]
\centering
\begin{tabular}{rrlrrrrrrrr}
  \hline
$\mu_{\beta_\Delta}$ & $\rho$ & model & p & n & RMSE & R² & MAE & DIR & $\alpha_0$ & $\alpha_1$ \\ 
  \hline
0.00 & 0.95 & Bayenet & 45 & 250 & 6.507 & 0.148 & 6.233 & 0.598 & 0.400 & 0.200 \\ 
  0.00 & 0.95 & Bayenet & 45 & 500 & 5.492 & 0.384 & 5.314 & 0.523 & 0.000 & 0.000 \\ 
  0.25 & 0.50 & krlsPoly & 45 & 250 & 8.002 & 0.706 & 6.367 & 0.502 & 0.000 & 0.000 \\ 
  0.25 & 0.50 & rpart2 & 45 & 250 & 5.784 & 0.494 & 4.571 & 0.764 & 1.000 & 0.000 \\ 
  0.25 & 0.50 & krlsPoly & 45 & 500 & 7.904 & 0.681 & 6.286 & 0.481 & 0.000 & 0.000 \\ 
  0.25 & 0.50 & rpart2 & 45 & 500 & 5.304 & 0.565 & 4.147 & 0.778 & 0.400 & 0.000 \\ 
  0.25 & 0.50 & krlsPoly & 45 & 750 & 7.729 & 0.735 & 6.227 & 0.494 & 0.000 & 0.000 \\ 
  0.25 & 0.95 & krlsPoly & 45 & 250 & 11.134 & 0.341 & 8.945 & 0.528 & 0.000 & 0.000 \\ 
  0.25 & 0.95 & krlsPoly & 45 & 500 & 10.574 & 0.638 & 8.523 & 0.493 & 0.000 & 0.000 \\ 
  0.25 & 0.95 & krlsPoly & 45 & 750 & 10.861 & 0.611 & 8.545 & 0.511 & 0.000 & 0.000 \\ 
  0.50 & 0.50 & krlsPoly & 15 & 250 & 5.327 & 0.616 & 4.284 & 0.525 & 0.000 & 0.000 \\ 
  0.50 & 0.50 & krlsPoly & 15 & 500 & 5.537 & 0.653 & 4.422 & 0.548 & 0.000 & 0.000 \\ 
  0.50 & 0.50 & krlsPoly & 15 & 750 & 5.423 & 0.674 & 4.328 & 0.627 & 0.000 & 0.000 \\ 
  0.50 & 0.50 & krlsPoly & 45 & 250 & 15.935 & 0.759 & 12.688 & 0.510 & 0.000 & 0.000 \\ 
  0.50 & 0.50 & rpart2 & 45 & 250 & 12.009 & 0.498 & 9.797 & 0.744 & 0.800 & 0.000 \\ 
  0.50 & 0.50 & krlsPoly & 45 & 500 & 15.762 & 0.801 & 12.466 & 0.524 & 0.000 & 0.000 \\ 
  0.50 & 0.50 & rpart2 & 45 & 500 & 10.535 & 0.568 & 8.431 & 0.767 & 1.000 & 0.000 \\ 
  0.50 & 0.50 & krlsPoly & 45 & 750 & 16.459 & 0.797 & 13.197 & 0.561 & 0.000 & 0.000 \\ 
  0.50 & 0.50 & rpart2 & 45 & 750 & 10.822 & 0.577 & 8.613 & 0.779 & 0.600 & 0.000 \\ 
  0.50 & 0.95 & krlsPoly & 15 & 250 & 7.327 & 0.541 & 5.832 & 0.590 & 0.000 & 0.000 \\ 
  0.50 & 0.95 & krlsPoly & 15 & 500 & 6.925 & 0.703 & 5.517 & 0.479 & 0.000 & 0.000 \\ 
  0.50 & 0.95 & krlsPoly & 15 & 750 & 7.362 & 0.698 & 5.887 & 0.509 & 0.000 & 0.000 \\ 
  0.50 & 0.95 & krlsPoly & 45 & 250 & 21.173 & 0.519 & 16.963 & 0.478 & 0.000 & 0.000 \\ 
  0.50 & 0.95 & krlsPoly & 45 & 500 & 21.999 & 0.577 & 17.548 & 0.565 & 0.000 & 0.000 \\ 
  0.50 & 0.95 & krlsPoly & 45 & 750 & 21.833 & 0.757 & 17.453 & 0.545 & 0.000 & 0.000 \\ 
   \hline
\end{tabular}
\caption{Internal Validation: Scenarios where the Root Mean Squared Error ($\operatorname{RMSE} \geq 5$) and Directional Accuracy ($\operatorname{DIR} \leq 0.8$), reflecting poor performance in estimation of personalized treatment effects (PITE).}\label{tab:internal_ext_worst_corr}
\end{table}

%%%%% best cases comparison %%%%%%%%%
\paragraph{Successful Model Performance Across Simulated Scenarios with Internal Validation}

\begin{figure}
    \centering
    \includegraphics[width=1\linewidth]{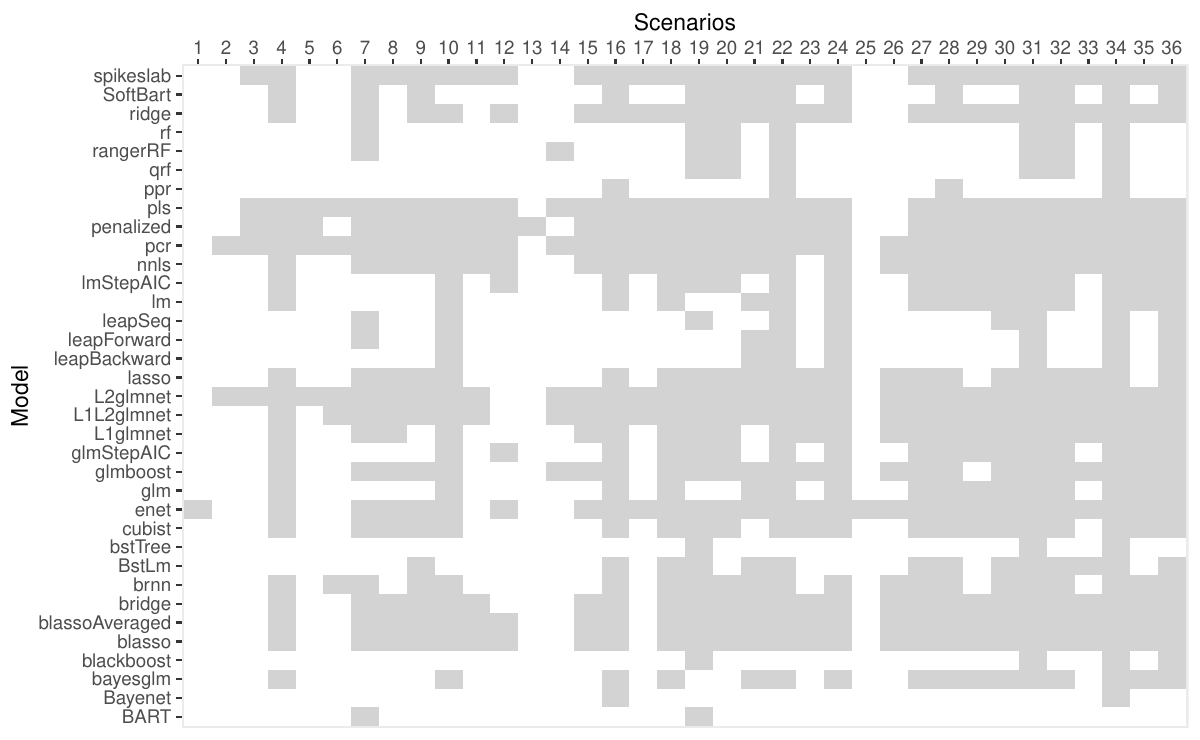}
    \caption{Internal Validation: Models Achieving High PITE Accuracy ($\operatorname{RMSE} < 1$ and $\operatorname{DIR} > 0.95$) Across External Validation Scenarios with Covariate Interactions}\label{fig:internal_best_risk_dir_corr}
        \vspace{0.5em}
\begin{center}        
\parbox{\textwidth}
{\small
\begin{minipage}{\linewidth}
\setlength{\tabcolsep}{3.5pt} % tighter columns
\textit{setting}$n$ = sample size; $\rho$ = covariate correlation; $\mu_{\beta_\Delta}$ = mean of the treatment effect coefficients; $p$ = number of covariates.\\[2.5pt]
\begin{tabular}{rcccc|rcccc|rcccc|rcccc}
\hline
ID & $n$ & $\rho$ & $\mu_{\beta_\Delta}$ & $p$ & 
ID & $n$ & $\rho$ & $\mu_{\beta_\Delta}$ & $p$ & 
ID & $n$ & $\rho$ & $\mu_{\beta_\Delta}$ & $p$ & 
ID & $n$ & $\rho$ & $\mu_{\beta_\Delta}$ & $p$ \\
\hline
1  & 250 & 0    & 0.50 & 5  & 10 & 250 & 0.95 & 0.50 & 15 & 19 & 500 & 0.95 & 0.25 & 15 & 28 & 750 & 0.5  & 0.50 & 15 \\
2  & 250 & 0.5  & 0.25 & 15 & 11 & 250 & 0.95 & 0.50 & 45 & 20 & 500 & 0.95 & 0.25 & 45 & 29 & 750 & 0.5  & 0.50 & 45 \\
3  & 250 & 0.5  & 0.25 & 45 & 12 & 250 & 0.95 & 0.50 & 5  & 21 & 500 & 0.95 & 0.25 & 5  & 30 & 750 & 0.5  & 0.50 & 5  \\
4  & 250 & 0.5  & 0.50 & 15 & 13 & 500 & 0    & 0.50 & 5  & 22 & 500 & 0.95 & 0.50 & 15 & 31 & 750 & 0.95 & 0.25 & 15 \\
5  & 250 & 0.5  & 0.50 & 45 & 14 & 500 & 0.5  & 0.25 & 15 & 23 & 500 & 0.95 & 0.50 & 45 & 32 & 750 & 0.95 & 0.25 & 45 \\
6  & 250 & 0.5  & 0.50 & 5  & 15 & 500 & 0.5  & 0.25 & 45 & 24 & 500 & 0.95 & 0.50 & 5  & 33 & 750 & 0.95 & 0.25 & 5  \\
7  & 250 & 0.95 & 0.25 & 15 & 16 & 500 & 0.5  & 0.50 & 15 & 25 & 750 & 0    & 0.50 & 15 & 34 & 750 & 0.95 & 0.50 & 15 \\
8  & 250 & 0.95 & 0.25 & 45 & 17 & 500 & 0.5  & 0.50 & 45 & 26 & 750 & 0.5  & 0.25 & 15 & 35 & 750 & 0.95 & 0.50 & 45 \\
9  & 250 & 0.95 & 0.25 & 5  & 18 & 500 & 0.5  & 0.50 & 5  & 27 & 750 & 0.5  & 0.25 & 45 & 36 & 750 & 0.95 & 0.50 & 5  \\
\hline
\end{tabular}
\end{minipage}
}
\end{center}
\end{figure}

Figure~\ref{fig:internal_best_risk_dir_corr} evidences that under internal validation with correlated covariates the landscape of success is reassuring but structured. In scenarios with strong performance ($\operatorname{RMSE} < 1$ and $\operatorname{DIR} > 0.95$): simple, penalized and projection-based linear methods are the most consistently successful approaches based on lower RMSE and higher DIR. Methods such as the glm-family (glm, bayesglm), penalized regressions (lasso, ridge, enet, L1/L2glmnet/L2glmnet), stepwise/selection variants (glmStepAIC, lmStepAIC, leapForward/Backward/Seq), and low-dimensional projection methods (pls, pcr, nnls, penalized) appear repeatedly across a very wide array of conditions---from small samples and weak covariate correlation to large samples and strong correlation (representative condition indices: [1, 6, 12, 18, 24, 30, 36]). 

By contrast, more flexible, nonparametric and tree/ensemble methods (BART/SoftBart, RF family, QRF, Cubist, boosted trees like blackboost/bstTree) succeed only under narrower conditions---typically when sample size, signal expected treatment effect and/or covariate correlation are favorable. For example, BART seems to work well under high-correlation moderate-signal settings ([7, 19] conditions — i.e. $\rho = 0.95$ with moderate expected treatment effect), while the random-forest family (rf, rangerRF, qrf) tends to appear in the successful set mainly in the high-correlation strata (e.g. [7, 19, 31]). 

\subsection{External Validation}

This second simulation scenario assesses the extent to which models trained in one population generalize when applied to an external population—a critical yet often underexamined challenge in practical applications. The simulation design follows the procedure described in \citet{Kuhlemeier_2024} and is detailed in Algorithm~\ref{alg:sim_external_validation}. Simulation parameters governing the covariate structure, including $p$, $n$, $\rho$, and $\mu_{\beta_{\Delta}}$, were varied as in the previous simulation scenario.

%\section{Results}
RMSE values ranged from a minimum of 1.34 to a maximum of 22.23, with a median of 1.56. The Direction (DIR) metrics ranged from 0.16 to 0.98 (mean = 0.69, median = 0.70), indicating good directional performance and clinical interpretability in most cases.

\subsubsection{Correlated Covariates}

Figure \ref{fig:Ext_risk_sensit_cor_te}  illustrates the trade-off between prediction error (RMSE) and directional accuracy (DIR) across a range of model configurations, treatment effect sizes, and covariate correlation structures. Predictive performance generally deteriorated with increasing correlation among covariates ($\rho$), with models under $\rho = 0.95$ exhibiting higher RMSE and lower DIR. As the mean of the treatment effect coefficients increased from 0 to 0.5, DIR improved due to stronger signal, but RMSE also rose in some cases, reflecting the increased difficulty of precise estimation. Models with larger sample sizes and moderate dimensionality (e.g., n = 750, p = 45) consistently achieved favorable performance (low RMSE, high DIR), particularly when covariates were weakly correlated. In contrast, models with high dimensionality and small sample sizes frequently failed to identify individual-level treatment effects accurately. In particular, the KRLS with polynomial basis expansions (krlsPoly) had its accuracy decreased with stronger correlation among covariates, with $\rho$ = 0.95 leading to elevated RMSE and reduced DIR. Even with sample $n=750$, high-dimensional configurations ($p \in c(15, 45)$) frequently performed poorly across high treatment effect levels. Full detail in Table~\ref{tab:ext_worst_corr}.

%%% general overview and worst case %%%%%%%
\begin{figure}[h]
    \centering
    \includegraphics[width=1\linewidth]{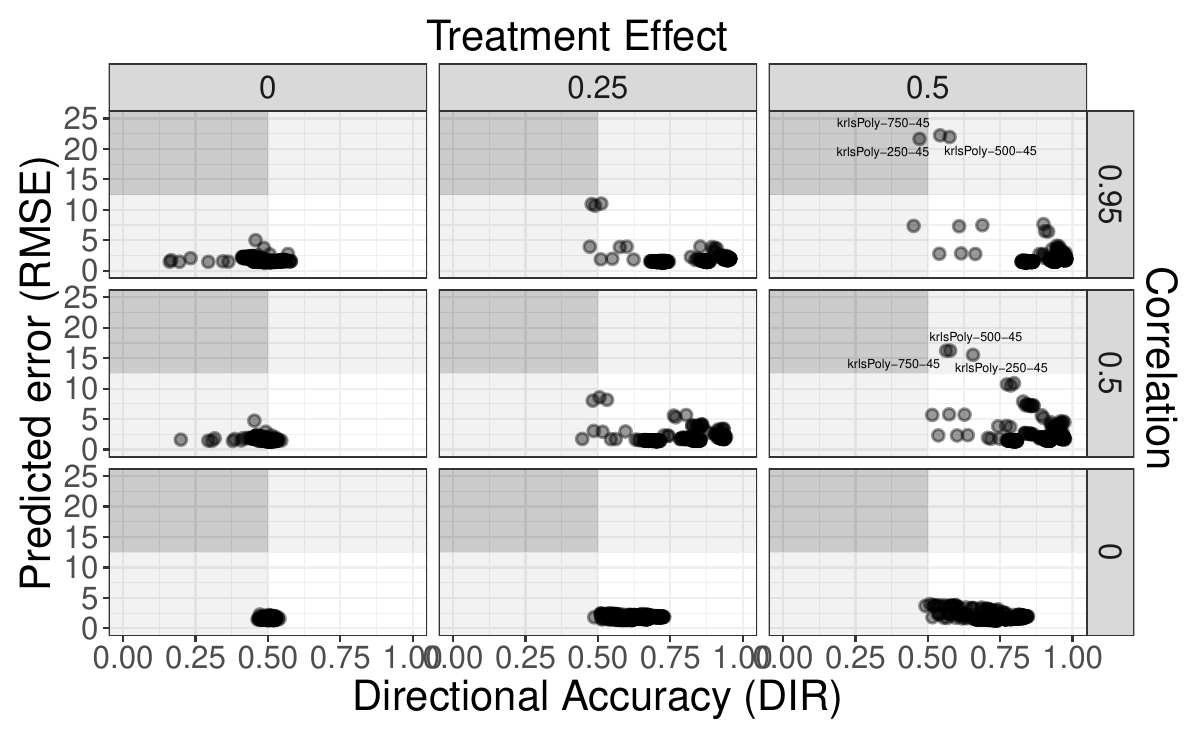}
    \caption{External validation (Correlated covariates): Relationship between predictive performance Root Mean Squared Error (RMSE) and Directional Accuracy (DIR) across models and treatment effect impact and correlation value between covariates. Each point represents a specific model-parameter configuration, labeled by model name, sample size (n), and number of predictors (p). Darkness gray areas reflect undesirable regions}
    \label{fig:Ext_risk_sensit_cor_te}
\end{figure}

\begin{table}[ht]
\centering
\begin{tabular}{rrlrrrrrrrr}
  \hline
$\mu_{\beta_\Delta}$ & $\rho$ & model & p & n & RMSE & R² & MAE & DIR & $\alpha$ & $\beta$ \\ 
  \hline
0.00 & 0.95 & Bayenet & 45 & 500 & 5.004 & 0.190 & 4.579 & 0.457 & 0.200 & 0.000 \\ 
  0.25 & 0.50 & krlsPoly & 45 & 250 & 8.580 & 0.620 & 7.003 & 0.506 & 0.000 & 0.000 \\ 
  0.25 & 0.50 & krlsPoly & 45 & 500 & 8.165 & 0.635 & 6.469 & 0.532 & 0.000 & 0.000 \\ 
  0.25 & 0.50 & rpart2 & 45 & 500 & 5.579 & 0.544 & 4.443 & 0.762 & 0.800 & 0.000 \\ 
  0.25 & 0.50 & krlsPoly & 45 & 750 & 8.028 & 0.659 & 6.382 & 0.483 & 0.000 & 0.000 \\ 
  0.25 & 0.50 & rpart2 & 45 & 750 & 5.288 & 0.575 & 4.224 & 0.769 & 0.800 & 0.000 \\ 
  0.25 & 0.95 & krlsPoly & 45 & 250 & 11.027 & 0.300 & 8.659 & 0.512 & 0.000 & 0.000 \\ 
  0.25 & 0.95 & krlsPoly & 45 & 500 & 10.671 & 0.457 & 8.595 & 0.491 & 0.000 & 0.000 \\ 
  0.25 & 0.95 & krlsPoly & 45 & 750 & 10.943 & 0.527 & 8.717 & 0.479 & 0.000 & 0.000 \\ 
  0.50 & 0.50 & krlsPoly & 15 & 250 & 5.676 & 0.452 & 4.555 & 0.515 & 0.000 & 0.000 \\ 
  0.50 & 0.50 & krlsPoly & 15 & 500 & 5.723 & 0.474 & 4.553 & 0.627 & 0.000 & 0.000 \\ 
  0.50 & 0.50 & krlsPoly & 15 & 750 & 5.773 & 0.497 & 4.603 & 0.574 & 0.000 & 0.000 \\ 
  0.50 & 0.50 & krlsPoly & 45 & 250 & 15.548 & 0.787 & 12.345 & 0.656 & 0.000 & 0.000 \\ 
  0.50 & 0.50 & rpart2 & 45 & 250 & 10.522 & 0.574 & 8.483 & 0.787 & 1.000 & 0.000 \\ 
  0.50 & 0.50 & krlsPoly & 45 & 500 & 16.248 & 0.812 & 13.088 & 0.577 & 0.000 & 0.000 \\ 
  0.50 & 0.50 & rpart2 & 45 & 500 & 10.902 & 0.568 & 8.620 & 0.798 & 0.800 & 0.000 \\ 
  0.50 & 0.50 & krlsPoly & 45 & 750 & 16.236 & 0.760 & 12.905 & 0.563 & 0.000 & 0.000 \\ 
  0.50 & 0.50 & rpart2 & 45 & 750 & 10.711 & 0.574 & 8.558 & 0.773 & 0.800 & 0.000 \\ 
  0.50 & 0.95 & krlsPoly & 15 & 250 & 7.357 & 0.410 & 5.827 & 0.451 & 0.000 & 0.000 \\ 
  0.50 & 0.95 & krlsPoly & 15 & 500 & 7.468 & 0.627 & 5.955 & 0.688 & 0.000 & 0.000 \\ 
  0.50 & 0.95 & krlsPoly & 15 & 750 & 7.295 & 0.665 & 5.823 & 0.608 & 0.000 & 0.000 \\ 
  0.50 & 0.95 & krlsPoly & 45 & 250 & 21.652 & 0.606 & 17.159 & 0.471 & 0.000 & 0.000 \\ 
  0.50 & 0.95 & krlsPoly & 45 & 500 & 21.960 & 0.602 & 17.486 & 0.575 & 0.000 & 0.000 \\ 
  0.50 & 0.95 & krlsPoly & 45 & 750 & 22.229 & 0.728 & 17.676 & 0.543 & 0.000 & 0.000 \\ 
   \hline
\end{tabular}
\caption{External validation (Correlated covariates): Scenarios where the Root Mean Squared Error (RMSE) exceeded 5 and Directional Accuracy (DIR) was below 0.8, suggesting suboptimal PITE estimation.}\label{tab:ext_worst_corr}
\end{table}

%%% best cases %%%%%%
\paragraph{Best-Performing Models Across External Validation Scenarios}

\begin{figure}[h]
    \centering
    \includegraphics[width=1\linewidth]{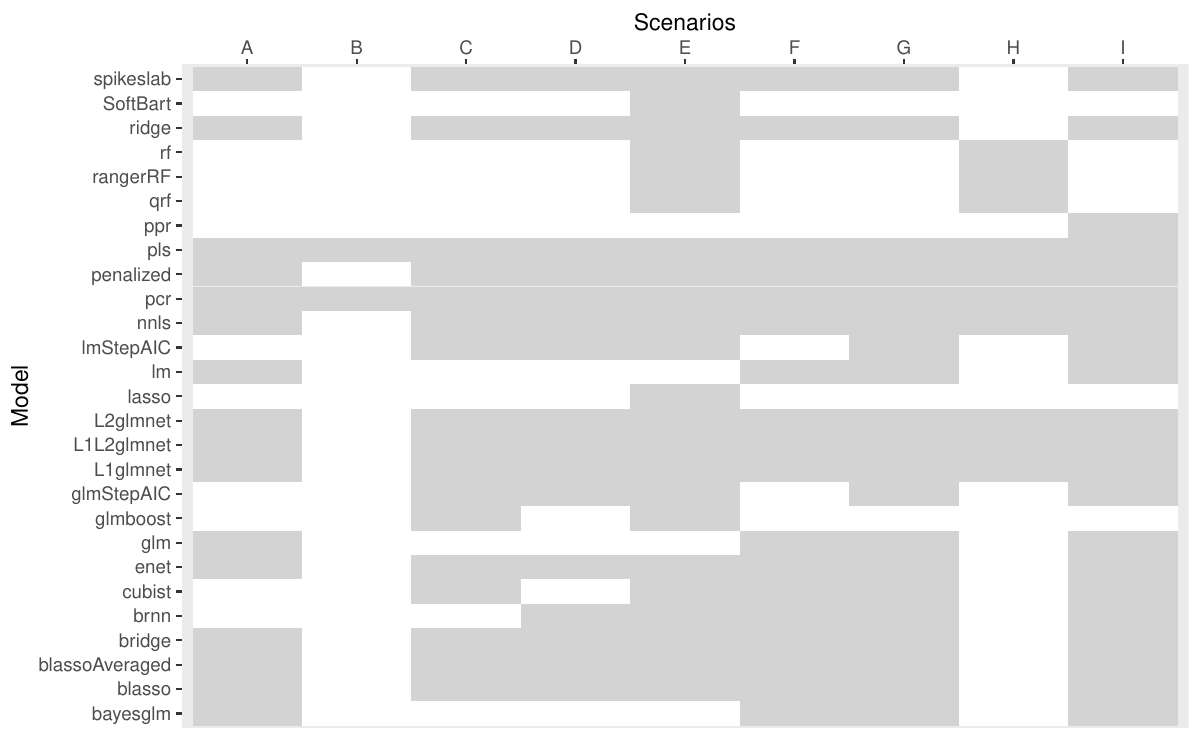}
    \caption{External validation (Correlated covariates): Models with lowest $\operatorname{RMSE}<2$  and High $\operatorname{DIR}>0.95$. Across Simulation Conditions [A--I] and dimension $p=45$.}\label{fig:best_risk_dir_external}
    \vspace{0.5em}
\parbox{\textwidth}{\small
\textit{setting:}
[A] $n=250$, $\rho=0.5$, ete=0.5; 
[B] $n=250$, $\rho=0.95$, ete=0.25; [C] $n=250$, $\rho=0.95$, ete=0.5; 
[D] $n=500$, $\rho=0.5$, ete=0.5; [E] $n=500$, $\rho=0.95$, ete=0.25; 
[F] $n=500$, $\rho=0.95$, ete=0.5; [G] $n=750$, $\rho=0.5$, ete=0.5; 
[H] $n=750$, $\rho=0.95$, ete=0.25; [I] $n=750$, $\rho=0.95$, ete=0.5.
}
\end{figure}

Figure~\ref{fig:best_risk_dir_external} summarizes the models that performed best under different simulation conditions in terms of minimizing the PITE RMSE while also maintaining directional accuracy---i.e., correctly identifying whether treatment is beneficial or harmful for each individual.

Overall, PLS and PCR consistently demonstrated robust performance across all nine simulation conditions. These models appeared particularly well-suited to settings with moderate to high multicollinearity ($\rho = 0.5$–$0.95$) and sample sizes of $n \in \{500, 750\}$, showing only limited degradation in the three scenarios with $n = 250$. They maintained low RMSE values and strong directional agreement across varying levels of the mean treatment effect coefficientes, $\mu_{\beta_{\Delta}}$. In contrast, more complex models such as lasso, ppr, SoftBart, and glmboost only succeeded under one or two conditions, suggesting that their performance may be more sensitive to the specific characteristics of the data-generating process. A group of models including ridge, spikeslab, bridge, and blasso showed intermediate robustness, performing well in approximately 7 out of the 9 settings. Collectively, these results suggest that simpler or regularized linear models often offer a favorable trade-off between predictive accuracy and interpretability for estimating PITE in structured simulation settings.

\subsubsection{Interactions between covariates}

The simulation study is designed to incorporate high-order interactions among covariates, reflecting scenarios in which treatment effects arise from complex underlying structures. Specifically, we generate samples of size $n \in \{500, 750, 1000\}$ with six baseline covariates, independently drawn from a multivariate normal distribution. From these covariates, we construct all possible interaction terms up to the sixth order, resulting in a full design matrix comprising 63 candidate predictors: the six main effects and all higher-order interactions through six-way terms. 

From this covariate set, a subset of size $p \in \{5, 15, 45\}$ is randomly selected to define the covariate set $W$, which is then used in the individual treatment bemefit generation.

\[
\Delta(X_i) = W_i^\top \beta_\Delta,
\]

Control potential outcomes are generated as a linear function dataset of baseline covariates $X$, also independently drawn from a multivariate normal distribution. Individual treatment effects are modeled as a linear function of $W$. The observed outcome for each individual is obtained by adding the treatment effect to the control outcome for treated individuals.

%\subsection{main results}
%%%% general overview %%%%%

under this scenario, model performance showed moderate accuracy with a mean RMSE of 1.77 (range: $1.42–4.64$). Directional accuracy (DIR), a key criterion for identifying beneficial versus harmful treatment, averaged 0.59, with a minimum of 0.46 and maximum of 0.73—indicating limited reliability in guiding individualized treatment decisions. 
 
\begin{figure}
    \centering
    \includegraphics[width=1\linewidth]{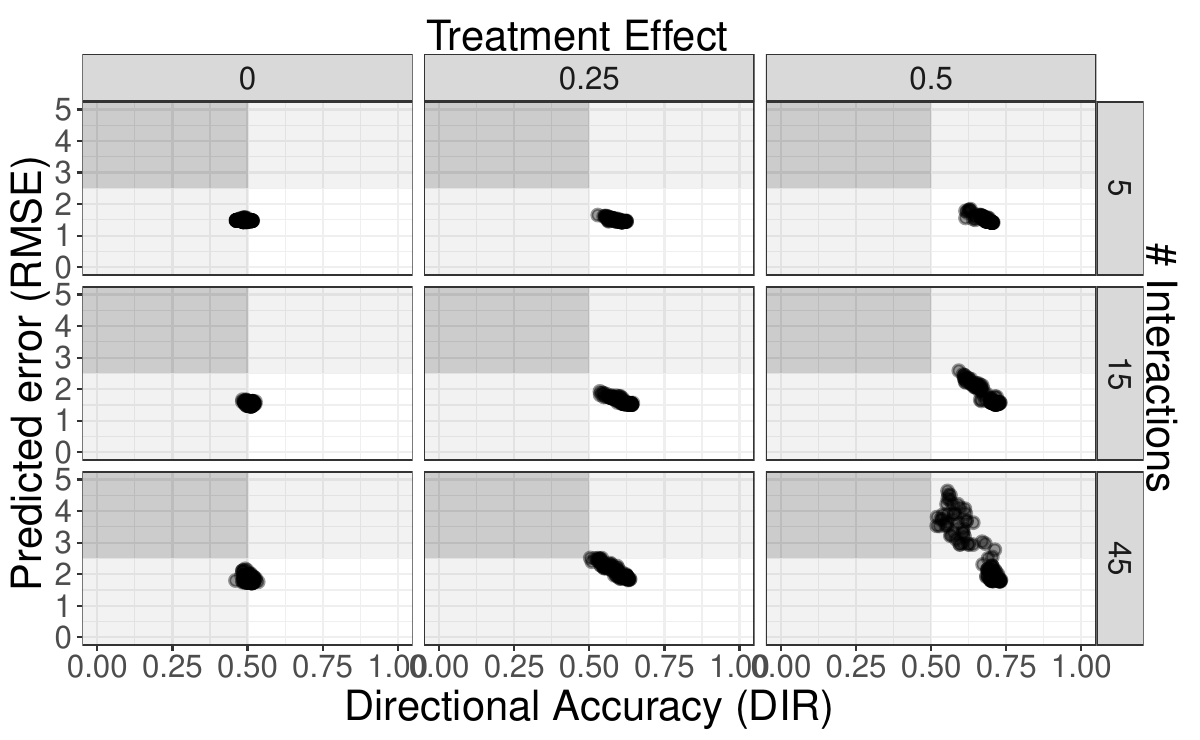}
    \caption{External validation (Interaction between covariates): Relationship between predictive performance Root Mean Squared Error (RMSE) and Directional Accuracy (DIR) across models and treatment effect impact and correlation value between covariates. Each point represents a specific model-parameter configuration, labeled by model name, sample size (n), and number of interactions. Darkness gray areas reflect undesirable regions}
    \label{fig:interac_risk_dir}
\end{figure}

Figure~\ref{fig:interac_risk_dir} illustrates the general variability in external validation including interaction effects based on how predictive accuracy and directional reliability vary as a function of both treatment effect size. Across all settings, models that fall into the upper-left grey region (high RMSE, low DIR) are of particular concern, as they not only predict poorly but also misclassify whether treatment is beneficial or harmful—posing a clinical risk if used for individualized decision-making. As the number of interaction terms increases, we observe greater variability in model performance, with more models drifting into the grey zone, especially when treatment effect is small or zero (e.g., leftmost panels).

\paragraph{Worst-Case Model Performance with Covariate Interactions}

%%%% worst cases %%%%%%%%%%

\begin{figure}
    \centering
    \includegraphics[width=1\linewidth]{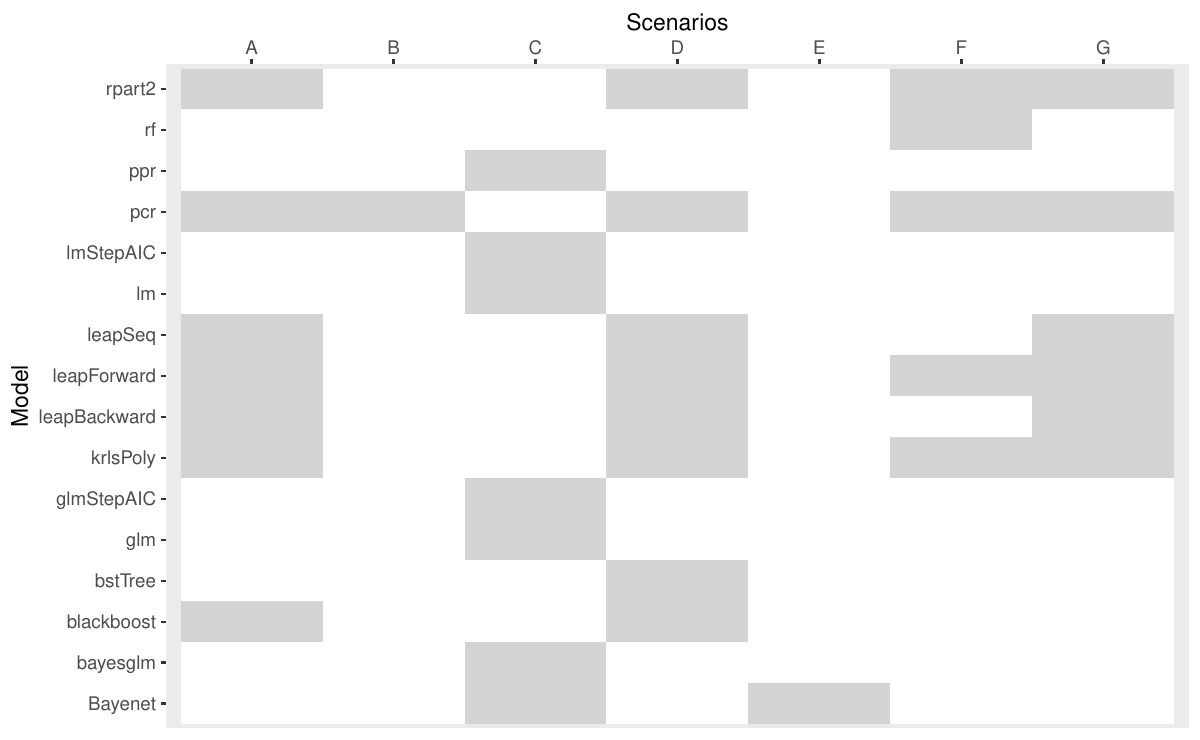}
    \caption{External validation (Interaction between covariates): Models with High $\operatorname{MSE}\geq 2$  and Low $\operatorname{DIR}\leq 0.55$. Across Simulation Conditions [A--G] and interactions dimensionality $45$.}
    \label{fig:interact_worst_risk_dir}
    \vspace{0.5em}
\parbox{\textwidth}{\small
\textit{setting:}
[A] $n=1000$, $\mu_{\beta_{\Delta}}$=0.25; 
[B] $n=1000$, $\mu_{\beta_{\Delta}}$=0.5; 
[C] $n=500$, $\mu_{\beta_{\Delta}}$=0; 
[D] $n=500$, $\mu_{\beta_{\Delta}}$=0.25; 
[E] $n=750$, $\mu_{\beta_{\Delta}}$=0; 
[F] $n=750$, $\mu_{\beta_{\Delta}}$=0.25; 
[G] $n=750$, $\mu_{\beta_{\Delta}}$=0.5; 
}
\end{figure}

Under external validation scenarios that included interaction effects among covariates and high dimensionality interactions ($45$), several models demonstrated poor performance, defined as having $\operatorname{RMSE} \geq 2$ and $\operatorname{DIR} \leq 0.55$. Figure~\ref{fig:interact_worst_risk_dir} shows that these worst-case failures occurred most frequently when the mean of the treatment effect, $\mu_{\beta_{\Delta}}$, was null or low, particularly under conditions [A] through [G], spanning sample sizes of 500 to 1000. Models such as pcr, rpart2, and krlsPoly exhibited the highest failure counts, failing in up to five conditions, while others such as leapForward, leapSeq, and blackboost failed in three or more. The consistent underperformance of these models in the presence of interaction effects suggests they are not well-suited for individualized treatment effect estimation under complex, high-dimensional settings.

%%%%%%%%%%%%% BEST case %%%%%%%%%%%%%%%
\paragraph{Best-Case Predictive Performance with Covariate Interactions}

Figure~\ref{fig:interac_best_risk_dir} illustrate that a wide range of models achieved high performance, defined as $\operatorname{RMSE} < 1.6$ and $\operatorname{DIR} > 0.65$, under external validation conditions with interaction effects and strong treatment signals ($\mu_{\beta_{\Delta}} = 0.5$). These successful cases were observed consistently across conditions [A] to [F], which involved moderate to large sample sizes ($n = 500–1000$) and lower interaction dimensionality ($\operatorname{number of interactions } = 5 \mbox{ or } \operatorname{number of interactions } 15$). Several models—including bayesglm, blasso, cubist, glmboost, ridge, spikeslab, and all glmnet variants—achieved perfect performance across all six conditions, demonstrating both high  predictive accuracy and reliable directional inference. Even more complex or adaptive models, such as BART, brnn, and bstTree maintained consistent performance, highlighting their robustness in interaction-rich scenarios.

\begin{figure}[h]
    \centering
    \includegraphics[width=1\linewidth]{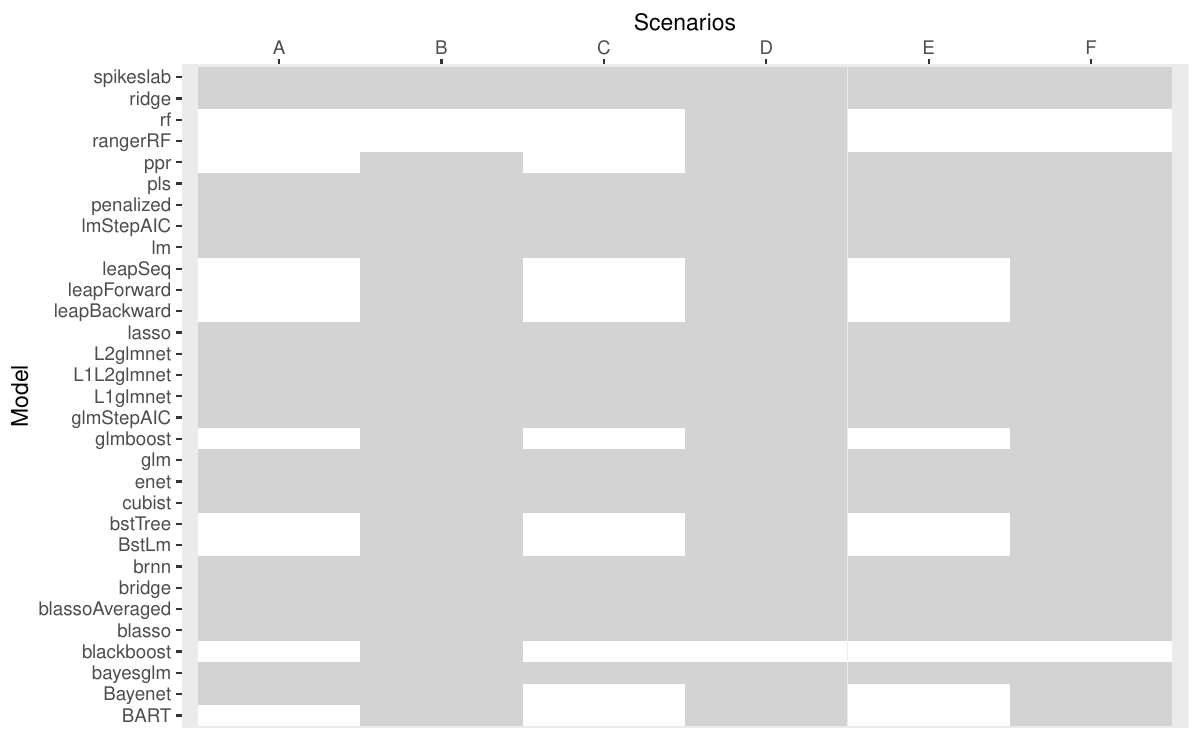}
    \caption{External validation (Interaction between covariates): Models Achieving High PITE Accuracy ($\operatorname{RMSE} < 1.6$ and $\operatorname{DIR} > 0.65$) Across External Validation Scenarios with Covariate Interactions}
    \label{fig:interac_best_risk_dir}
        \vspace{0.5em}
\parbox{\textwidth}{\small
\textit{setting:}
[A] $n=1000$, number of interactions $15$; 
[B] $n=1000$, number of interactions $5$; 
[C] $n=500$, number of interactions $15$; 
[D] $n=500$, number of interactions $5$; 
[E] $n=750$, number of interactions $15$; 
[F] $n=750$, number of interactions $5$.
}
\end{figure}

\subsection{Illustrative Example: Model-Specific Performance Across Individual Cases Complexity}
To examine how predictive accuracy varies across individual observations, we analyze a representative simulated dataset consisting of 250 validation patients based on the \emph{directional accuracy} indicator (Equation~\ref{eq:direction}), which quantifies the sign agreement between the estimated and true Predicted Individual Treatment Effects (PITE). For each patient $i$, we computed the proportion of models that correctly predicted the treatment direction,
\[
C_i = \frac{1}{K} \sum_{k=1}^{K} D_{i}^{(k)},
\]
where $K$ denotes the total number of evaluated models. The value $C_i \in [0,1]$ thus represents the fraction of methods correctly identifying whether the treatment benefited or harmed patient $i$, and serves as a measure of \emph{predictive consensus} or \emph{complexity}. Patients were stratified into twelve ordered complexity classes, ranging from $C_i=1$ (all models correct) to $C_i=0$ (no model correct), in 10\% increments.

Conceptually, the extremes of this spectrum are informative. Patients in the \textbf{100\% class} correspond to the simplest, most predictable cases, for whom all evaluated models correctly inferred the treatment effect direction—indicating clear, consistent treatment-response patterns. In contrast, the \textbf{0\% class} represents the most complex or ambiguous cases, where no model correctly predicted the treatment direction. These patients likely exhibit idiosyncratic treatment responses, weak signal-to-noise ratios, or interaction structures not well captured by any of the models. Intermediate classes (e.g., 40--70\%) reflect varying levels of predictive consensus, where only a subset of models correctly discerned the treatment direction.

For each class, we computed the proportion of correctly classified patients for every modeling approach. Results are summarized in Table~\ref{tab:directional_accuracy}. This complexity-based framework provides a flexible view of model robustness under increasing patient-level heterogeneity, facilitating direct comparison of predictive strategies in terms of both reliability and adaptability.

\begin{table}[!htbp]\footnotesize
\centering
\caption{Directional accuracy of 11 predictive modeling strategies across patients stratified by complexity.}
\label{tab:directional_accuracy}
\begin{tabular}{l r *{11}{r}}
\toprule
\textbf{class} & \textbf{number} & \textbf{pls} & \textbf{brnn} & \textbf{rf} & \textbf{bridge} & \textbf{enet} & \textbf{pcr} & \textbf{ridge} & \textbf{spikeslab} & \textbf{blasso} & \textbf{lasso} & \textbf{BART} \\
\midrule
0\% & 27 & 0.0 & 0.0 & 0.0 & 0.0 & 0.0 & 0.0 & 0.0 & 0.0 & 0.0 & 0.0 & 0.0 \\ 
  (0-10] & 8 & 0.0 & 12.5 & 25.0 & 0.0 & 0.0 & 25.0 & 12.5 & 0.0 & 0.0 & 0.0 & 25.0 \\ 
  (10-20] & 10 & 0.0 & 10.0 & 30.0 & 0.0 & 10.0 & 80.0 & 20.0 & 10.0 & 0.0 & 0.0 & 40.0 \\ 
  (20-30] & 5 & 20.0 & 20.0 & 60.0 & 20.0 & 0.0 & 60.0 & 40.0 & 40.0 & 0.0 & 0.0 & 40.0 \\ 
  (30-40] & 5 & 20.0 & 20.0 & 60.0 & 40.0 & 0.0 & 80.0 & 40.0 & 20.0 & 20.0 & 20.0 & 80.0 \\ 
  (40-50] & 8 & 12.5 & 62.5 & 50.0 & 62.5 & 37.5 & 50.0 & 62.5 & 12.5 & 50.0 & 37.5 & 62.5 \\ 
  (50-60] & 9 & 44.4 & 77.8 & 0.0 & 33.3 & 77.8 & 22.2 & 66.7 & 44.4 & 100.0 & 100.0 & 33.3 \\ 
  (60-70] & 10 & 90.0 & 100.0 & 40.0 & 60.0 & 80.0 & 20.0 & 70.0 & 70.0 & 70.0 & 80.0 & 20.0 \\ 
  (70-80] & 21 & 95.2 & 90.5 & 19.0 & 90.5 & 95.2 & 14.3 & 90.5 & 90.5 & 95.2 & 100.0 & 19.0 \\ 
  (80-90] & 21 & 95.2 & 95.2 & 33.3 & 100.0 & 100.0 & 38.1 & 90.5 & 95.2 & 95.2 & 95.2 & 61.9 \\ 
  (90-100) & 45 & 97.8 & 100.0 & 86.7 & 100.0 & 100.0 & 44.4 & 93.3 & 95.6 & 97.8 & 100.0 & 84.4 \\ 
  100\% & 81 & 100.0 & 100.0 & 100.0 & 100.0 & 100.0 & 100.0 & 100.0 & 100.0 & 100.0 & 100.0 & 100.0 \\ 
\bottomrule
\end{tabular}
\begin{tablenotes}
\small
\item \textbf{Notes:} The complexity classes reflect patient-level difficulty, defined by the fraction of methods that correctly predicted the true PITE direction. The number indicate the quantity of patients within each class for which each method correctly identified the treatment direction. Eleven widely used methods from representative families are included: penalized regression (lasso, blasso, enet, ridge), projection-based methods (pcr, pls), tree-based models (rf, BART), Bayesian approaches (bridge, spikeslab), and a neural network (brnn).
\end{tablenotes}
\end{table}

Table~\ref{tab:directional_accuracy} reports results from a representative simulated dataset consisting of 250 validation patients ($n = 500$, $p = 45$, $ete = 0.5$), with individuals stratified into 12 ordered complexity classes. Complexity was defined as the proportion of methods that correctly predicted the direction of the true Predicted Individual Treatment Effect (PITE) for each patient. 

Across the lowest-complexity stratum (90--100\%), all models exhibited near-perfect directional accuracy ($>95\%$). As complexity increased, however, performance diverged substantially. Flexible learners such as BART, BRNN, Ridge, and PCR remained robust even among moderately complex patients (30--60\%), whereas purely penalized linear models (Lasso, BLasso, Elastic Net) deteriorated rapidly, failing to recover the correct direction in high-complexity settings (challenging patients). Tree ensembles (RF) and Bayesian shrinkage methods (Bridge, Spike-and-Slab) displayed intermediate resilience, maintaining strong performance through the mid-range (40--70\%) but declining thereafter.

These results underscore the distinct operating regimes of predictive learners. Methods such as BART and BRNN combine nonlinear flexibility with strong regularization, enabling them to adapt gracefully as the signal structure becomes more intricate. Ridge and PCR exhibit robustness rooted in bias–variance stability, performing consistently across moderate levels of heterogeneity. In contrast, Lasso-type estimators, while interpretable and parsimonious, are highly sensitive to feature correlations and interaction effects, limiting their utility under complex treatment-response surfaces.

From a practical perspective, the \emph{complexity-based framework} presented here provides a structured diagnostic for assessing the stability of PITE estimates. The emergence of near-universal agreement at 100\% and complete disagreement at 0\% highlights the boundaries of current statistical learning methods: the former reflects predictably structured treatment responses, while the latter reveals fundamental uncertainty in PITE estimates. 

In applied contexts, these findings suggest that for populations dominated by low- to moderate-complexity patients, flexible Bayesian or ensemble learners (e.g., BART, BRNN) may offer the most reliable treatment direction estimates. When interpretability or feature selection is critical, penalized regressions can still serve as complementary tools within an ensemble or stacking framework. The interpretive lens of predictive consensus thus bridges methodological evaluation with clinical insight, guiding principled model choice for heterogeneous patient populations.

\section{Conclusion}

Our simulations confirm that internal validation---particularly when correlations are present but without interaction effects---offers a smoother, less volatile environment for evaluating predicted individual treatment effects (PITE). Under these conditions, many methods achieve both low estimation Root Mean Squared Error (RMSE) and high directional accuracy (DIR), masking potential weaknesses. In contrast, external validation exposes true model robustness by introducing distributional shifts through $\mu_{\beta_{\Delta}}$ and, in some scenarios, complex higher-order interactions. These interactions sharply reduce performance for several common approaches, underscoring the necessity of evaluating models under conditions that challenge their generality. External validation remains the gold standard for assessing whether a model’s internal performance can translate into clinical decision-making.

Across models and multiple validation regimes, regularized linear and projection-based methods---including ridge, lasso, elastic net, glmnet variants, partial least squares (PLS), and principal components regression (PCR)---consistently delivered low RMSE and high DIR across diverse sample sizes, multicollinearity levels, and treatment effect magnitudes. These approaches showed remarkable stability, avoiding the severe deterioration observed in kernel methods (e.g., krlsPoly) and recursive partitioning (rpart2), which often exhibited $\operatorname{RMSE} \geq 5$ and $\operatorname{DIR}<0.8$. Flexible learners such as Bayesian additive regression trees (BART), boosting algorithms, and certain Bayesian penalized regressions performed competitively, but only under favorable regimes---moderate-to-large samples, stronger treatment signals ($\mu_{\beta_{\Delta}} \approx 0.5$), and manageable dimensionality ($p \leq 15$). Considering strict thresholds ($\operatorname{RMSE} < 1, \operatorname{DIR} > 0.95$), even under internal validation, only a minority of models achieved repeated success across high-complexity conditions.

These results offer a roadmap for selecting and auditing models in precision medicine and clinical trial settings. First, penalized linear and projection-based methods provide robust defaults, maintaining stability across a wide range of realistic conditions. Second, flexible and nonparametric approaches—including tree-based ensembles and Bayesian learners—should be viewed as conditional strategies, effective when data characteristics align with their strengths but prone to instability otherwise. Third, both estimation RMSE and directionality are essential for assessing PITE validity under external validation, as a model that predicts effect magnitude well but misclassifies treatment direction can mislead clinical decisions. Finally, internal validation alone is insufficient for certifying a method’s clinical readiness; rigorous external validation is required to ensure safe and generalizable deployment. As the field moves toward Explainable AI (XAI), balancing predictive performance, interpretability, and robustness will be key to translating PITE estimation into actionable, trustworthy clinical tools.

\section{Discussion}
Our simulations show that in complex, high-dimensional settings, regularized linear and projection-based methods consistently provide the most reliable predicted individual treatment effects (PITE). In $p=45$ scenarios---mirroring modern biomedical applications such as genomics, imaging, and EHR-based care---PLS and PCR effectively reduced dimensionality while preserving predictive structure, yielding low RMSE and high DIR even at extreme multicollinearity ($\rho=0.95$). In contrast, models like krlsPoly and rpart2, frequently appeared in the poor-performing table, produced inflated RMSE (up to 22.2) and low DIR ($< 0.6$), especially when both correlation and treatment effect size were high—precisely where accurate treatment assignment is most critical.

External validation results reinforce the need for caution. Some models that performed well in-sample degraded sharply when applied to independent data, indicating poor generality. Regularized and projection-based approaches retained both magnitude and direction of treatment effects under these shifts, supporting their use when robustness is essential. Models insensitive to interactions---such as stepwise regression variants, PCR, and rpart2---were especially vulnerable, while lasso, ridge, elastic net, Bayesian approaches (blasso, bayesglm, spikeslab), and flexible learners (BART, brnn, cubist) achieved stable, accurate performance across varied conditions.

Internal validation alone proved insufficient to ensure stability. Poor performers clustered in the low-DIR, high-RMSE region even without interaction terms, underscoring the need for regularization and interaction-aware diagnostics. Although methods like glmnet, blasso, ridge, and PLS excelled internally, their strongest results occurred in settings with large treatment effect and sample size, highlighting the importance of stress-testing against weaker signals and external shifts.

These results, consistent with prior work \citep{Rolling2014}, emphasize that PITE-specific validation---using metrics like RMSE and DIR---should replace reliance on indirect outcome-based diagnostics. While our simulations captured high-dimensionality, multicollinearity, and complex interactions, future research should address calibration, domain adaptation, and real-world clinical datasets.

Robust, interpretable methods---particularly regularized or projection-based---offer a promising foundation for treatment stratification and personalized decision-making. By combining internal and external validation with directionality assessment, these approaches can deliver clinically actionable PITEs while avoiding overfitting and instability.

One important point to take into account is that, in our simulations, data-generating mechanisms follow linear regression assumptions, under which most models are expected to perform favorably. However, by evaluating PITE across linear, nonlinear, and tree-based methods, we explicitly quantify prediction differences between structural assumptions.

Finally, Table~\ref{tab:directional_accuracy} demonstrates that model performance within the PITE framework is strongly influenced by patient-level complexity. While most methods perform well in low-complexity (``easy'') cases, only a subset—most notably BART, BRNN, Ridge, and PCR—maintain high directional accuracy as complexity increases. These broadly robust learners exhibit near-perfect agreement in the 90–100\% class and retain meaningful predictive ability even under moderate heterogeneity, suggesting their resilience to noise and their ability to capture nonlinear or interaction effects. Flexible Bayesian and ensemble models such as Bridge, Spike-and-Slab, and Random Forest also perform competitively in low-to-intermediate complexity strata but show greater instability in highly complex cases, consistent with their sensitivity to weak or entangled treatment signals. In contrast, penalized linear methods including Lasso, BLasso, and Elastic Net, although valuable for interpretability and variable selection, exhibit sharp declines in performance as complexity increases, reflecting their limited flexibility in capturing higher-order dependencies. Projection-based methods like PLS and PCR perform adequately in low-complexity settings but deteriorate rapidly when the underlying latent structure misaligns with treatment-response mechanisms. Notably, across the most complex strata (0–10\%), no evaluated model succeeded in recovering the correct treatment direction, underscoring the inherent difficulty of these cases and the need for hybrid or ensemble frameworks that combine flexible learners with prior knowledge. Collectively, these results emphasize that model selection in PITE estimation should not rely solely on aggregate performance metrics but should account for the interaction between patient complexity and model adaptability. In practice, BART and BRNN emerge as reliable default strategies for precision medicine applications when extremely complex patients are rare, whereas combining penalized regression with more flexible learners may achieve an effective balance between interpretability and robustness in heterogeneous populations.

% \section*{Acknowledgments}
% Funding for this work was received from NIAAA grant R01AA030264 and the UK Medical Research Council grant MC\_UU\_00040/03. For the purpose of open access, the author has applied a Creative Commons Attribution (CC BY) license to any Author Accepted Manuscript version arising.

\section*{Declaration of Interest Statement}
No conflicts of interest were declared.

\bibliographystyle{plainnat}
\bibliography{pite_refs}

@Manual{Rsoftware,
  title        = {R: A Language and Environment for Statistical Computing},
  author       = {{R Core Team}},
  organization = {R Foundation for Statistical Computing},
  address      = {Vienna, Austria},
  year         = {2023},
  url          = {https://www.R-project.org/},
  note         = {R version 4.3.1 (2023-06-16, "Beagle Scouts")}
}

@Manual{zeallot,
  title = {zeallot: Multiple, Unpacking, and Destructuring Assignment},
  author = {Nathan Teetor},
  year = {2018},
  note = {R package version 0.1.0},
  url = {https://CRAN.R-project.org/package=zeallot},
}

@Manual{readxl,
  title = {readxl: Read Excel Files},
  author = {Hadley Wickham and Jennifer Bryan},
  year = {2023},
  note = {R package version 1.4.3},
  url = {https://CRAN.R-project.org/package=readxl},
}

@Article{MatchIt,
  title = {{MatchIt}: Nonparametric Preprocessing for Parametric Causal Inference},
  author = {Daniel E. Ho and Kosuke Imai and Gary King and Elizabeth A. Stuart},
  year = {2011},
  journal = {Journal of Statistical Software},
  volume = {42},
  number = {8},
  pages = {1--28},
  doi = {10.18637/jss.v042.i08},
}

@Manual{dplyr,
  title = {dplyr: A Grammar of Data Manipulation},
  author = {Hadley Wickham and Romain François and Lionel Henry and Kirill Müller and Davis Vaughan},
  year = {2023},
  note = {R package version 1.1.4},
  url = {https://CRAN.R-project.org/package=dplyr},
}

@Manual{tidyr,
  title = {tidyr: Tidy Messy Data},
  author = {Hadley Wickham and Davis Vaughan and Maximilian Girlich},
  year = {2024},
  note = {R package version 1.3.1},
  url = {https://CRAN.R-project.org/package=tidyr},
}

@Article{caret,
  title = {Building Predictive Models in R Using the caret Package},
  author = {{Kuhn} and {Max}},
  journal = {Journal of Statistical Software},
  year = {2008},
  volume = {28},
  number = {5},
  pages = {1--26},
  doi = {10.18637/jss.v028.i05},
  url = {https://www.jstatsoft.org/index.php/jss/article/view/v028i05},
}

@Manual{pls,
  title = {pls: Partial Least Squares and Principal Component Regression},
  author = {Kristian Hovde Liland and Bjørn-Helge Mevik and Ron Wehrens},
  year = {2024},
  note = {R package version 2.8-5},
  url = {https://CRAN.R-project.org/package=pls},
}

@Article{lme4,
  title = {Fitting Linear Mixed-Effects Models Using {lme4}},
  author = {Douglas Bates and Martin M{\"a}chler and Ben Bolker and Steve Walker},
  journal = {Journal of Statistical Software},
  year = {2015},
  volume = {67},
  number = {1},
  pages = {1--48},
  doi = {10.18637/jss.v067.i01},
}

@Manual{mboost,
  title = {mboost: Model-based Boosting in R},
  author = {Benjamin Hofner and Andreas Mayr and Nikola Robinzonov and Matthias Schmid},
  year = {2014},
  note = {R package},
}

@Manual{bst,
  title = {bst: Gradient Boosting},
  author = {Zhu Wang},
  year = {2022},
  note = {R package version 0.3-24},
  url = {https://CRAN.R-project.org/package=bst},
}

@Manual{monomvn,
  title = {monomvn: Estimation for MVN and Student-t Data with Monotone Missingness},
  author = {Robert B. Gramacy},
  year = {2024},
  note = {R package version 1.9-21},
  url = {https://CRAN.R-project.org/package=monomvn},
}

@Article{randomForest,
  title = {Classification and Regression by randomForest},
  author = {Andy Liaw and Matthew Wiener},
  journal = {R News},
  year = {2002},
  volume = {2},
  number = {3},
  pages = {18-22},
  url = {https://CRAN.R-project.org/doc/Rnews/},
}

@Article{party,
  title = {Unbiased Recursive Partitioning: A Conditional Inference Framework},
  author = {Torsten Hothorn and Kurt Hornik and Achim Zeileis},
  journal = {Journal of Computational and Graphical Statistics},
  year = {2006},
  volume = {15},
  number = {3},
  pages = {651--674},
  doi = {10.1198/106186006X133933},
}

@Manual{Cubist,
  title = {Cubist: Rule- And Instance-Based Regression Modeling},
  author = {Max Kuhn and Ross Quinlan},
  year = {2025},
  note = {R package version 0.5.0},
  url = {https://CRAN.R-project.org/package=Cubist},
}

@Manual{elasticnet,
  title = {elasticnet: Elastic-Net for Sparse Estimation and Sparse PCA},
  author = {Hui Zou and Trevor Hastie},
  year = {2020},
  note = {R package version 1.3},
  url = {https://CRAN.R-project.org/package=elasticnet},
}

@Manual{leaps,
  title = {leaps: Regression Subset Selection},
  author = {Thomas Lumley based on Fortran code by Alan Miller},
  year = {2024},
  note = {R package version 3.2},
  url = {https://CRAN.R-project.org/package=leaps},
}

@Manual{penalized,
  title = {penalized: L1 Penalized Estimation},
  author = {Jelle J. Goeman},
  year = {2022},
  note = {R package version 0.9-52},
}

@Article{KRLS,
  title = {Kernel-Based Regularized Least Squares in {R} ({KRLS}) and {Stata} ({krls})},
  author = {Jeremy Ferwerda and Jens Hainmueller and Chad J. Hazlett},
  journal = {Journal of Statistical Software},
  year = {2017},
  volume = {79},
  number = {3},
  pages = {1--26},
  doi = {10.18637/jss.v079.i03},
}

@Manual{quantregForest,
  title = {quantregForest: Quantile Regression Forests},
  author = {Nicolai Meinshausen},
  year = {2024},
  note = {R package version 1.3-7.1},
  url = {https://CRAN.R-project.org/package=quantregForest},
}

@Manual{spikeslab,
    title = {spikeslab : Prediction and Variable Selection Using Spike
      and Slab Regression},
    author = {H. Ishwaran and J.S. Rao and U.B. Kogalur},
    publisher = {manual},
    year = {2022},
    note = {R package version 1.1.6},
    url = {https://cran.r-project.org/package=spikeslab},
  }

@Manual{arm,
  title = {arm: Data Analysis Using Regression and Multilevel/Hierarchical Models},
  author = {Andrew Gelman and Yu-Sung Su},
  year = {2024},
  note = {R package version 1.14-4},
  url = {https://CRAN.R-project.org/package=arm},
}

@Manual{brnn,
  title = {brnn: Bayesian Regularization for Feed-Forward Neural Networks},
  author = {Paulino {Perez Rodriguez} and Daniel Gianola},
  year = {2025},
  note = {R package version 0.9.4},
  url = {https://CRAN.R-project.org/package=brnn},
}

@Article{ranger,
  title = {{ranger}: A Fast Implementation of Random Forests for High Dimensional Data in {C++} and {R}},
  author = {Marvin N. Wright and Andreas Ziegler},
  journal = {Journal of Statistical Software},
  year = {2017},
  volume = {77},
  number = {1},
  pages = {1--17},
  doi = {10.18637/jss.v077.i01},
}

@Manual{Bayenet,
  title = {Bayenet: Robust Bayesian Elastic Net},
  author = {Xi Lu and Cen Wu},
  year = {2025},
  note = {R package version 0.3},
  url = {https://CRAN.R-project.org/package=Bayenet},
}

@Article{SoftBart,
  title = {SoftBart: Soft Bayesian Additive Regression Trees},
  author = {Antonio R. Linero},
  journal = {arXiv e-prints},
  year = {2022},
}

@Article{BART,
  title = {Nonparametric Machine Learning and Efficient Computation with {B}ayesian Additive Regression Trees: The {BART} {R} Package},
  author = {Rodney Sparapani and Charles Spanbauer and Robert McCulloch},
  journal = {Journal of Statistical Software},
  year = {2021},
  volume = {97},
  number = {1},
  pages = {1--66},
  doi = {10.18637/jss.v097.i01},
}

@Article{glmnet,
  title = {Regularization Paths for Generalized Linear Models via Coordinate Descent},
  author = {Friedman J and Hastie T and Tibshirani R},
  journal = {Journal of Statistical Software},
  year = {2010},
  volume = {33},
  number = {1},
  pages = {1--22},
  doi = {10.18637/jss.v033.i01},
}

@Manual{glmnetUtils,
  title = {glmnetUtils: Utilities for 'Glmnet'},
  author = {Hong Ooi},
  year = {2023},
  note = {R package version 1.1.9},
  url = {https://CRAN.R-project.org/package=glmnetUtils},
}

@Book{ggplot2,
  author = {Hadley Wickham},
  title = {ggplot2: Elegant Graphics for Data Analysis},
  publisher = {Springer-Verlag New York},
  year = {2016},
  isbn = {978-3-319-24277-4},
  url = {https://ggplot2.tidyverse.org},
}

@Article{tidyverse,
  title = {Welcome to the {tidyverse}},
  author = {Hadley Wickham and Mara Averick and Jennifer Bryan and Winston Chang and Lucy D'Agostino McGowan and Romain François and Garrett Grolemund and Alex Hayes and Lionel Henry and Jim Hester and Max Kuhn and Thomas Lin Pedersen and Evan Miller and Stephan Milton Bache and Kirill Müller and Jeroen Ooms and David Robinson and Dana Paige Seidel and Vitalie Spinu and Kohske Takahashi and Davis Vaughan and Claus Wilke and Kara Woo and Hiroaki Yutani},
  year = {2019},
  journal = {Journal of Open Source Software},
  volume = {4},
  number = {43},
  pages = {1686},
  doi = {10.21105/joss.01686},
}

@misc{Zhao2017,
  doi = {10.48550/ARXIV.1705.08492},
  url = {https://arxiv.org/abs/1705.08492},
  author = {Zhao,  Yan and Fang,  Xiao and Simchi-Levi,  David},
  title = {Uplift Modeling with Multiple Treatments and General Response Types},
  publisher = {arXiv},
  year = {2017},
  copyright = {arXiv.org perpetual,  non-exclusive license}
}

@article{Rolling2014,
  title={Model selection for estimating treatment effects},
  author={Rolling, Craig A. and Yang, Yuhong},
  journal={Journal of the Royal Statistical Society: Series B (Statistical Methodology)},
  volume={76},
  number={4},
  pages={749--769},
  year={2014},
  publisher={Wiley Online Library}
}

@article{Efthimiou2023,
  title = {Measuring the performance of prediction models to personalize treatment choice},
  volume = {42},
  ISSN = {1097-0258},
  url = {http://dx.doi.org/10.1002/sim.9665},
  DOI = {10.1002/sim.9665},
  number = {8},
  journal = {Statistics in Medicine},
  publisher = {Wiley},
  author = {Efthimiou,  Orestis and Hoogland,  Jeroen and Debray,  Thomas P.A. and Seo,  Michael and Furukawa,  Toshiaki A. and Egger,  Matthias and White,  Ian R.},
  year = {2023},
  month = jan,
  pages = {1188–1206}
}

@article{Angrist2004,
 ISSN = {00130133, 14680297},
 URL = {http://www.jstor.org/stable/3590307},
 author = {Joshua D. Angrist},
 journal = {The Economic Journal},
 number = {494},
 pages = {C52--C83},
 publisher = {[Royal Economic Society, Wiley]},
 title = {Treatment Effect Heterogeneity in Theory and Practice},
 urldate = {2025-06-05},
 volume = {114},
 year = {2004}
}

@article{Hill2011,
  title = {Bayesian Nonparametric Modeling for Causal Inference},
  volume = {20},
  ISSN = {1537-2715},
  url = {http://dx.doi.org/10.1198/jcgs.2010.08162},
  DOI = {10.1198/jcgs.2010.08162},
  number = {1},
  journal = {Journal of Computational and Graphical Statistics},
  publisher = {Informa UK Limited},
  author = {Hill,  Jennifer L.},
  year = {2011},
  month = jan,
  pages = {217–240}
}

@article{Kent2018,
  title = {Personalized evidence based medicine: predictive approaches to heterogeneous treatment effects},
  ISSN = {1756-1833},
  url = {http://dx.doi.org/10.1136/bmj.k4245},
  DOI = {10.1136/bmj.k4245},
  journal = {BMJ},
  publisher = {BMJ},
  author = {Kent,  David M and Steyerberg,  Ewout and van Klaveren,  David},
  year = {2018},
  month = dec,
  pages = {k4245}
}

@book{Steyerberg2019,
  title = {Clinical Prediction Models: A Practical Approach to Development,  Validation,  and Updating},
  ISBN = {9783030163990},
  ISSN = {2197-5671},
  url = {http://dx.doi.org/10.1007/978-3-030-16399-0},
  DOI = {10.1007/978-3-030-16399-0},
  journal = {Statistics for Biology and Health},
  publisher = {Springer International Publishing},
  author = {Steyerberg,  Ewout W.},
  year = {2019}
}

@article{VanCalster2018,
  title = {Reporting and Interpreting Decision Curve Analysis: A Guide for Investigators},
  volume = {74},
  ISSN = {0302-2838},
  url = {http://dx.doi.org/10.1016/j.eururo.2018.08.038},
  DOI = {10.1016/j.eururo.2018.08.038},
  number = {6},
  journal = {European Urology},
  publisher = {Elsevier BV},
  author = {Van Calster,  Ben and Wynants,  Laure and Verbeek,  Jan F.M. and Verbakel,  Jan Y. and Christodoulou,  Evangelia and Vickers,  Andrew J. and Roobol,  Monique J. and Steyerberg,  Ewout W.},
  year = {2018},
  month = dec,
  pages = {796–804}
}

@article{Hoogland2024,
  title = {Evaluating individualized treatment effect predictions: A model‐based perspective on discrimination and calibration assessment},
  volume = {43},
  ISSN = {1097-0258},
  url = {http://dx.doi.org/10.1002/sim.10186},
  DOI = {10.1002/sim.10186},
  number = {23},
  journal = {Statistics in Medicine},
  publisher = {Wiley},
  author = {Hoogland,  J. and Efthimiou,  O. and Nguyen,  T. L. and Debray,  T. P. A.},
  year = {2024},
  month = aug,
  pages = {4481–4498}
}

@article{Gao2021,
  title = {Assessment of heterogeneous treatment effect estimation accuracy via matching},
  volume = {40},
  ISSN = {1097-0258},
  url = {http://dx.doi.org/10.1002/sim.9010},
  DOI = {10.1002/sim.9010},
  number = {17},
  journal = {Statistics in Medicine},
  publisher = {Wiley},
  author = {Gao,  Zijun and Hastie,  Trevor and Tibshirani,  Robert},
  year = {2021},
  month = apr,
  pages = {3990–4013}
}

@book{seber2004linear,
  title={Linear Regression Analysis},
  author={Seber, George AF and Lee, Alan J},
  year={2004},
  publisher={John Wiley \& Sons}
}

@book{mccullagh1989generalized,
  title={Generalized Linear Models},
  author={McCullagh, Peter and Nelder, John A},
  year={1989},
  publisher={Chapman and Hall/CRC}
}

@article{Tibshirani1996,
 ISSN = {00359246},
 URL = {http://www.jstor.org/stable/2346178},
 author = {Robert Tibshirani},
 journal = {Journal of the Royal Statistical Society. Series B (Methodological)},
 number = {1},
 pages = {267--288},
 publisher = {[Royal Statistical Society, Oxford University Press]},
 title = {Regression Shrinkage and Selection via the Lasso},
 urldate = {2025-06-03},
 volume = {58},
 year = {1996}
}

@article{Goeman2010,
  title = {L1Penalized Estimation in the Cox Proportional Hazards Model},
  volume = {52},
  ISSN = {1521-4036},
  url = {http://dx.doi.org/10.1002/bimj.200900028},
  DOI = {10.1002/bimj.200900028},
  number = {1},
  journal = {Biometrical Journal},
  publisher = {Wiley},
  author = {Goeman,  Jelle J.},
  year = {2010},
  month = feb,
  pages = {70–84}
}

@book{Lawson1995,
  title = {Solving Least Squares Problems},
  ISBN = {9781611971217},
  url = {http://dx.doi.org/10.1137/1.9781611971217},
  DOI = {10.1137/1.9781611971217},
  publisher = {Society for Industrial and Applied Mathematics},
  author = {Lawson,  Charles L. and Hanson,  Richard J.},
  year = {1995},
  month = jan 
}

@book{Jolliffe_PCA2002,
  ISBN = {0387954422},
  url = {http://dx.doi.org/10.1007/b98835},
  DOI = {10.1007/b98835},
  journal = {Springer Series in Statistics},
  publisher = {Springer-Verlag},
  title={Principal Component Analysis},
  author={Jolliffe, Ian T},
  year = {2002}
}

@article{Wold1987,
  title = {Principal component analysis},
  volume = {2},
  ISSN = {0169-7439},
  url = {http://dx.doi.org/10.1016/0169-7439(87)80084-9},
  DOI = {10.1016/0169-7439(87)80084-9},
  number = {1–3},
  journal = {Chemometrics and Intelligent Laboratory Systems},
  publisher = {Elsevier BV},
  author = {Wold,  Svante and Esbensen,  Kim and Geladi,  Paul},
  year = {1987},
  month = aug,
  pages = {37–52}
}

@article{Efron2004,
  title = {Least angle regression},
  volume = {32},
  ISSN = {0090-5364},
  url = {http://dx.doi.org/10.1214/009053604000000067},
  DOI = {10.1214/009053604000000067},
  number = {2},
  journal = {The Annals of Statistics},
  publisher = {Institute of Mathematical Statistics},
  author = {Efron,  Bradley and Hastie,  Trevor and Johnstone,  Iain and Tibshirani,  Robert},
  year = {2004},
  month = apr 
}

@article{Breiman2001,
title={Random Forests},
  volume = {45},
  ISSN = {0885-6125},
  url = {http://dx.doi.org/10.1023/A:1010933404324},
  DOI = {10.1023/a:1010933404324},
  number = {1},
  journal = {Machine Learning},
  publisher = {Springer Science and Business Media LLC},
  author = {Breiman,  Leo},
  year = {2001},
  pages = {5–32}
}

@article{meinshausen2006quantile,
  title={Quantile regression forests.},
  author={Meinshausen, Nicolai and Ridgeway, Greg},
  journal={Journal of machine learning research},
  volume={7},
  number={6},
  year={2006}
}

@Manual{Quinlan,
  title = {Cubist: Rule- And Instance-Based Regression Modeling},
  author = {Max Kuhn and Ross Quinlan},
  year = {2025},
  note = {R package version 0.5.0.9000, https://github.com/topepo/Cubist},
  url = {https://topepo.github.io/Cubist/},
}

@article{Brown1998,
 ISSN = {13697412, 14679868},
 URL = {http://www.jstor.org/stable/2985935},
 author = {P. J. Brown and M. Vannucci and T. Fearn},
 journal = {Journal of the Royal Statistical Society. Series B (Statistical Methodology)},
 number = {3},
 pages = {627--641},
 publisher = {[Royal Statistical Society, Wiley]},
 title = {Multivariate Bayesian Variable Selection and Prediction},
 urldate = {2025-06-03},
 volume = {60},
 year = {1998}
}

@article{Park2008,
  title = {The Bayesian Lasso},
  volume = {103},
  ISSN = {1537-274X},
  url = {http://dx.doi.org/10.1198/016214508000000337},
  DOI = {10.1198/016214508000000337},
  number = {482},
  journal = {Journal of the American Statistical Association},
  publisher = {Informa UK Limited},
  author = {Park,  Trevor and Casella,  George},
  year = {2008},
  month = jun,
  pages = {681–686}
}

@book{Gelman2006,
  title = {Data Analysis Using Regression and Multilevel/Hierarchical Models},
  ISBN = {9780511790942},
  url = {http://dx.doi.org/10.1017/CBO9780511790942},
  DOI = {10.1017/cbo9780511790942},
  publisher = {Cambridge University Press},
  author = {Gelman,  Andrew and Hill,  Jennifer},
  year = {2006},
  month = dec 
}

@article{tipping2001,
  author = {Tipping, Michael E.},
  journal = {Journal of Machine Learning Research},
  pages = {211--244},
  title = {{Sparse Bayesian Learning and the Relevance Vector Machine}},
  url = {http://jmlr.csail.mit.edu/papers/v1/tipping01a.html},
  volume = 1,
  year = 2001
}

@article{MacKay1992,
  title = {A Practical Bayesian Framework for Backpropagation Networks},
  volume = {4},
  ISSN = {1530-888X},
  url = {http://dx.doi.org/10.1162/neco.1992.4.3.448},
  DOI = {10.1162/neco.1992.4.3.448},
  number = {3},
  journal = {Neural Computation},
  publisher = {MIT Press},
  author = {MacKay,  David J. C.},
  year = {1992},
  month = may,
  pages = {448–472}
}

@article{Friedman1981,
  title = {Projection Pursuit Regression},
  volume = {76},
  ISSN = {1537-274X},
  url = {http://dx.doi.org/10.1080/01621459.1981.10477729},
  DOI = {10.1080/01621459.1981.10477729},
  number = {376},
  journal = {Journal of the American Statistical Association},
  publisher = {Informa UK Limited},
  author = {Friedman,  Jerome H. and Stuetzle,  Werner},
  year = {1981},
  month = dec,
  pages = {817–823}
}

@article{Bair2006,
  title = {Prediction by Supervised Principal Components},
  volume = {101},
  ISSN = {1537-274X},
  url = {http://dx.doi.org/10.1198/016214505000000628},
  DOI = {10.1198/016214505000000628},
  number = {473},
  journal = {Journal of the American Statistical Association},
  publisher = {Informa UK Limited},
  author = {Bair,  Eric and Hastie,  Trevor and Paul,  Debashis and Tibshirani,  Robert},
  year = {2006},
  month = mar,
  pages = {119–137}
}

@article{Nadaraya1964,
  title = {On Estimating Regression},
  volume = {9},
  ISSN = {1095-7219},
  url = {http://dx.doi.org/10.1137/1109020},
  DOI = {10.1137/1109020},
  number = {1},
  journal = {Theory of Probability \&amp; Its Applications},
  publisher = {Society for Industrial \& Applied Mathematics (SIAM)},
  author = {Nadaraya,  E. A.},
  year = {1964},
  month = jan,
  pages = {141–142}
}

@book{Koenker2005,
  title = {Quantile Regression},
  ISBN = {9780511754098},
  url = {http://dx.doi.org/10.1017/CBO9780511754098},
  DOI = {10.1017/cbo9780511754098},
  publisher = {Cambridge University Press},
  author = {Koenker,  Roger},
  year = {2005}
}

@article{Hastie-Zou2005,
 ISSN = {13697412, 14679868},
 URL = {http://www.jstor.org/stable/3647580},
 author = {Hui Zou and Trevor Hastie},
 journal = {Journal of the Royal Statistical Society. Series B (Statistical Methodology)},
 number = {2},
 pages = {301--320},
 publisher = {[Royal Statistical Society, Wiley]},
 title = {Regularization and Variable Selection via the Elastic Net},
 urldate = {2025-06-03},
 volume = {67},
 year = {2005}
}

@article{Hoerl2000,
  title = {Ridge Regression: Biased Estimation for Nonorthogonal Problems},
  volume = {42},
  ISSN = {1537-2723},
  url = {http://dx.doi.org/10.1080/00401706.2000.10485983},
  DOI = {10.1080/00401706.2000.10485983},
  number = {1},
  journal = {Technometrics},
  publisher = {Informa UK Limited},
  author = {Hoerl,  Arthur E. and Kennard,  Robert W.},
  year = {2000},
  month = feb,
  pages = {80–86}
}

@article{Jaki2024,
  title = {Predicting Individual Treatment Effects: Challenges and Opportunities for Machine Learning and Artificial Intelligence},
  ISSN = {1610-1987},
  url = {http://dx.doi.org/10.1007/s13218-023-00827-4},
  DOI = {10.1007/s13218-023-00827-4},
  journal = {KI - K\"{u}nstliche Intelligenz},
  publisher = {Springer Science and Business Media LLC},
  author = {Jaki,  Thomas and Chang,  Chi and Kuhlemeier,  Alena and Van Horn,  M. Lee},
  year = {2024},
  month = jan 
}

@article{Chang2021,
  doi = {10.1177/09622802211033640},
  url = {https://doi.org/10.1177/09622802211033640},
  year = {2021},
  month = sep,
  publisher = {{SAGE} Publications},
  volume = {30},
  number = {11},
  pages = {2369--2381},
  author = {Chi Chang and Thomas Jaki and Muhammad Saad Sadiq and Alena Kuhlemeier and Daniel Feaster and Natalie Cole and Andrea Lamont and Daniel Oberski and Yasin Desai and M. Lee Van Horn},
  title = {A permutation test for assessing the presence of individual differences in treatment effects},
  journal = {Statistical Methods in Medical Research}
}

@article{Dorie2016,
  title = {A flexible,  interpretable framework for assessing sensitivity to unmeasured confounding},
  volume = {35},
  ISSN = {1097-0258},
  url = {http://dx.doi.org/10.1002/sim.6973},
  DOI = {10.1002/sim.6973},
  number = {20},
  journal = {Statistics in Medicine},
  publisher = {Wiley},
  author = {Dorie,  Vincent and Harada,  Masataka and Carnegie,  Nicole Bohme and Hill,  Jennifer},
  year = {2016},
  month = may,
  pages = {3453–3470}
}

@article{Wager2018,
  title = {Estimation and Inference of Heterogeneous Treatment Effects using Random Forests},
  volume = {113},
  ISSN = {1537-274X},
  url = {http://dx.doi.org/10.1080/01621459.2017.1319839},
  DOI = {10.1080/01621459.2017.1319839},
  number = {523},
  journal = {Journal of the American Statistical Association},
  publisher = {Informa UK Limited},
  author = {Wager,  Stefan and Athey,  Susan},
  year = {2018},
  month = jun,
  pages = {1228–1242}
}

@article{Lamont2016,
  title = {Identification of predicted individual treatment effects in randomized clinical trials},
  volume = {27},
  ISSN = {1477-0334},
  url = {http://dx.doi.org/10.1177/0962280215623981},
  DOI = {10.1177/0962280215623981},
  number = {1},
  journal = {Statistical Methods in Medical Research},
  publisher = {SAGE Publications},
  author = {Lamont,  Andrea and Lyons,  Michael D and Jaki,  Thomas and Stuart,  Elizabeth and Feaster,  Daniel J and Tharmaratnam,  Kukatharmini and Oberski,  Daniel and Ishwaran,  Hemant and Wilson,  Dawn K and Van Horn,  M Lee},
  year = {2016},
  month = mar,
  pages = {142–157}
}

@ARTICLE{Kuhlemeier_2024,title={Validation of predicted individual treatment effects in out of sample respondents.},year={2024},author={Alena Kuhlemeier and Tomas Jaki and Katie Witkiewitz and Elizabeth A. Stuart and M Lee Van Horn},doi={10.1002/sim.10187},pmid={39075029},pmcid={null},mag_id={null},journal={Statistics in Medicine}}

@article{Ballarini2018,
  doi = {10.1371/journal.pone.0205971},
  url = {https://doi.org/10.1371/journal.pone.0205971},
  year = {2018},
  month = oct,
  publisher = {Public Library of Science ({PLoS})},
  volume = {13},
  number = {10},
  pages = {e0205971},
  author = {Nicol{\'{a}}s M. Ballarini and Gerd K. Rosenkranz and Thomas Jaki and Franz K\"{o}nig and Martin Posch},
  editor = {Alan Hubbard},
  title = {Subgroup identification in clinical trials via the predicted individual treatment effect},
  journal = {{PLOS} {ONE}}
}

@article{Friedman2001,
  title = {Greedy function approximation: A gradient boosting machine.},
  volume = {29},
  ISSN = {0090-5364},
  url = {http://dx.doi.org/10.1214/aos/1013203451},
  DOI = {10.1214/aos/1013203451},
  number = {5},
  journal = {The Annals of Statistics},
  publisher = {Institute of Mathematical Statistics},
  author = {Friedman,  Jerome H.},
  year = {2001},
  month = oct 
}

@article{Chipman2010,
  title = {BART: Bayesian additive regression trees},
  volume = {4},
  ISSN = {1932-6157},
  url = {http://dx.doi.org/10.1214/09-AOAS285},
  DOI = {10.1214/09-aoas285},
  number = {1},
  journal = {The Annals of Applied Statistics},
  publisher = {Institute of Mathematical Statistics},
  author = {Chipman,  Hugh A. and George,  Edward I. and McCulloch,  Robert E.},
  year = {2010},
  month = mar 
}

\newpage

\appendix
\section{Approaches}\label{app: xapproach}
\subsection*{Linear and Regularized Linear Models}

\textbf{Linear models}, the standard of statistical analysis, known for their interpretability. The \textbf{Linear Model (lm)} posits a direct linear relationship between predictors and the response, estimating coefficients by minimizing the sum of squared residuals \citep{seber2004linear}. While offering high interpretability and being computationally inexpensive, it is sensitive to linearity assumptions, normality of residuals, and multicollinearity. Expanding on this, the \textbf{Generalized Linear Model (glm)} provides increased flexibility by accommodating diverse outcome distributions (e.g., binomial for binary data, Poisson for count data) through various link functions \citep{mccullagh1989generalized}. This versatility, also achieved with low computational burden, makes it invaluable for non-normal dependent variables.

To address challenges inherent in complex datasets, particularly multicollinearity and high dimensionality, we then deployed \textbf{regularized linear models}. 
\textbf{Lasso Regression (lasso)} employs an L1 penalty, which promotes model sparsity by setting some coefficients precisely to zero, thereby performing intrinsic feature selection \citep{Tibshirani1996}. \textbf{Ridge Regression (ridge)} incorporates an L2 penalty, effectively shrinking coefficients to manage multicollinearity and reduce variance without forcing coefficients to zero \citep{Hoerl2000}. Both regularization are achieved with relatively low computational cost. Combining these strengths, \textbf{Elastic Net (enet)} leverages both L1 and L2 penalties, providing a robust solution that handles highly correlated predictors while still performing effective variable selection \citep{Hastie-Zou2005}; its computational demands remain low. For situations demanding highly tailored regularization, \textbf{Penalized Regression (penalized)} offers a flexible framework for custom penalty structures \citep{Goeman2010}, allowing specific control over regularization at minimal computational expense. Furthermore, when coefficients must inherently be non-negative, as in certain physical or economic applications, \textbf{Non-Negative Least Squares (nnls)} provides a constrained linear solution ensuring positivity \citep{Lawson1995}, also with low computational cost.

For datasets where dimensionality reduction or variable selection is paramount, we also utilized methods that transform or select features before modeling. \textbf{Principal Component Regression (pcr)} first orthogonally transforms predictors into principal components, then applies linear regression \citep{Jolliffe_PCA2002}. While effective for high-dimensional data with multicollinearity, this process incurs a medium computational cost and reduces direct interpretability. Similarly, \textbf{Partial Least Squares (pls)} constructs latent components that maximize covariance between predictors and response, making it highly effective for high-dimensional settings where prediction accuracy is prioritized \citep{Wold1987}. Its training also involves a medium computational cost. Finally, \textbf{Stepwise Regression (leapForward, leapBackward, lmStepAIC)}, an iterative variable selection technique guided by criteria like AIC, was used to identify salient predictors \citep{Efron2004}. This method typically presents a low computational cost, though its iterative nature can sometimes lead to local optima.

\subsection*{Tree-based and Ensemble Methods}

Our analysis also leveraged the power of \textbf{tree-based and ensemble methods}, known for their ability to capture complex nonlinear relationships and enhance predictive robustness. The foundational \textbf{Decision Tree (rpart2)} offers high interpretability through its rule-based structure and is computationally inexpensive, though it is susceptible to overfitting and high variance. To address this, \textbf{Random Forest (rf)}, an ensemble method, constructs multiple decision trees via bagging (bootstrap aggregating), significantly improving predictive accuracy and robustness against overfitting, particularly in high-dimensional and nonlinear contexts \citep{Breiman2001}. The training of Random Forest models typically involves a medium computational cost. An extension, \textbf{Quantile Random Forest (qrf)}, further enriches our understanding by predicting conditional quantiles of the response, providing detailed uncertainty quantification and insights into distributional tails \citep{meinshausen2006quantile}; this added detail also entails a medium computational cost.

Beyond bagging, \textbf{boosting techniques} were also implemented. The family of \textbf{Boosted Trees (blackboost, bstTree, gamboost, glmboost, BstLm)} sequentially builds weak learners to correct previous errors, consistently achieving high predictive performance \citep{Friedman2001}. These models generally incur a medium to high computational cost due to their iterative nature. Bridging ensemble methods with Bayesian principles, \textbf{Bayesian Trees (bartMachine)}, based on Bayesian Additive Regression Trees (BART), provides excellent predictive power while offering crucial uncertainty estimates for predictions through a sum of many simple trees \citep{Chipman2010}. The inherent complexity of its Bayesian inference makes its computational cost high. For models balancing interpretability with accuracy, \textbf{Rule-Based Tree (cubist)} generates interpretable rules, each associated with a local linear model, combining the strengths of decision rules and linear regression with a medium computational cost \citep{Quinlan}.

\subsection*{Bayesian Models}

To explicitly quantify uncertainty and incorporate prior knowledge, a suite of \textbf{Bayesian models} was employed. \textbf{Spike-and-Slab (spikeslab)} offers a principled Bayesian framework for simultaneous variable selection and parameter estimation by using specific priors that encourage sparsity \citep{Brown1998}. The reliance on Markov Chain Monte Carlo (MCMC) methods for inference makes its computational cost high, particularly for high-dimensional data. In a similar vein, the \textbf{Bayesian Lasso (blasso)} provides a probabilistic interpretation of Lasso regression with an L1 prior, delivering full posterior distributions for coefficients and inherently accounting for regularization parameter uncertainty \citep{Park2008}. Its computational cost is medium to high, depending on the complexity of the MCMC sampling. To enhance stability, \textbf{Averaged Bayesian Lasso (blassoAveraged)} reduces variability by averaging over multiple Bayesian Lasso models, which, as expected, translates to a high computational burden due to the multiple sampling runs.

Further expanding our Bayesian toolkit, \textbf{Bayesian Ridge Regression (bridge)} provides a probabilistic framework for coefficient shrinkage with an L2 prior, effectively handling multicollinearity while quantifying parameter uncertainty \citep{Hoerl2000, Goeman2010}. Its MCMC-based inference places its computational cost in the medium to high range. The \textbf{Bayesian GLM (bayesglm)} extends Generalized Linear Models to a Bayesian setting, allowing for probabilistic inference of parameters across various outcome distributions \citep{Gelman2006}. While highly flexible, its reliance on sampling methods generally incurs a medium computational cost. For nonlinear relationships, \textbf{Kernel Bayesian Regression (rvmPoly)}, a sparse Bayesian kernel method, learns a subset of "relevance vectors" critical for prediction, offering probabilistic outputs for complex nonlinear scenarios \citep{tipping2001}. This model also carries a medium to high computational cost.

\subsection*{Neural Networks}

For tasks demanding high flexibility and complex pattern recognition, we included \textbf{Neural Networks}. Specifically, the \textbf{Bayesian Neural Network (brnn)} integrates Bayesian inference by placing prior distributions over network weights \citep{MacKay1992}. This approach not only enhances robustness against overfitting but also provides invaluable uncertainty estimates for predictions, which is critical for making informed decisions in sensitive applications. The inherent complexity of Bayesian inference for neural networks results in a high computational cost.

\subsection*{Nonparametrics, kernel, latent variable, projections and constrained models}

Finally, several other specialized models were utilized for specific analytical advantages. \textbf{Projection Pursuit (ppr)} is a non-parametric method designed to uncover hidden nonlinear relationships by searching for "interesting" low-dimensional projections of predictors \citep{Friedman1981}. While powerful for exploratory analysis of high-dimensional data, it typically involves a medium computational cost. For supervised dimensionality reduction, \textbf{PCA-Based (superpc)} selects principal components based on their correlation with the response, subsequently used in a regression model \citep{Bair2006}. This method is particularly useful in high-dimensional domains like genomics and carries a medium computational cost. When nonlinear relationships are present but their form is unknown, \textbf{Kernel Regression (krlsPoly)} provides a flexible, non-parametric approach by weighting local data points \citep{Nadaraya1964}. This flexibility often comes with a medium computational cost, dependent on dataset size and kernel choice. Lastly, to understand how predictors influence different parts of the response distribution (not just the mean), \textbf{Quantile Regression (rqlasso)} was employed, which also incorporates Lasso regularization for variable selection at specific quantiles \citep{Koenker2005}. This approach, providing richer distributional insights, generally incurs a medium computational cost.

Latent Variable Models are about the unseen structure assumed to generate the data.
Constrained Models are about the rules or restrictions applied during the model fitting process.

\begin{table}[ht]
\centering
\caption{List of Models and Their Abbreviations}\label{app: abrev}
\begin{tabular}{ll}
\hline
\textbf{Full Name} & \textbf{Abbreviation} \\
\hline

Linear Regression & \texttt{lm} \\
Generalized Linear Model & \texttt{glm} \\
Ridge Regression & \texttt{ridge} \\
The Lasso & \texttt{lasso} \\
Elastic Net & \texttt{enet} \\
Penalized Linear Regression & \texttt{penalized} \\
Linear Model with Forward Selection & \texttt{leapForward} \\
Linear Model with Backward Selection & \texttt{leapBackward} \\
Linear Regression with Stepwise Selection & \texttt{lmStepAIC} \\
Principal Component Regression & \texttt{pcr} \\
Partial Least Squares & \texttt{pls} \\
Non-Negative Least Squares & \texttt{nnls} \\
Spike and Slab Regression & \texttt{spikeslab} \\
The Bayesian Lasso & \texttt{blasso} \\
Bayesian Ridge Regression (Model Averaged) & \texttt{blassoAveraged} \\
Bayesian Ridge Regression & \texttt{bridge} \\
Bayesian Generalized Linear Model & \texttt{bayesglm} \\
Bayesian Regularized Neural Networks & \texttt{brnn} \\
Random Forest & \texttt{rf} \\
Quantile Random Forest & \texttt{qrf} \\
CART & \texttt{rpart2} \\
Cubist & \texttt{cubist} \\
Boosted Tree & \texttt{blackboost} \\
Boosted Tree & \texttt{bstTree} \\
Boosted Generalized Additive Model & \texttt{gamboost} \\
Boosted Generalized Linear Model & \texttt{glmboost} \\
Boosted Linear Model & \texttt{BstLm} \\
Bayesian Additive Regression Trees & \texttt{bart-softbart} \\
Relevance Vector Machines with Polynomial Kernel & \texttt{rvmPoly} \\
Projection Pursuit Regression & \texttt{ppr} \\
Polynomial Kernel Regularized Least Squares & \texttt{krlsPoly} \\
Quantile Regression with LASSO Penalty & \texttt{rqlasso} \\
% Boosted Smoothing Spline & \texttt{bstSm} \\
% GLM with Stepwise Feature Selection & \texttt{glmStepAIC} \\
% Linear Model with Stepwise Selection & \texttt{leapSeq} \\
% Non-Convex Penalized Quantile Regression & \texttt{rqnc} \\
\hline
\end{tabular}
\label{tab:models}
\end{table}

\begin{landscape}
\begin{table}[h!]
    \centering
    \scriptsize % Using a very small font to ensure it fits
    \caption{Comprehensive Table of Regression Models}
    \label{tab:methods}
    \begin{tabular}{p{1.5cm} p{1.5cm} p{1.8cm} p{1.5cm} p{2cm} p{1.8cm} p{1.5cm}}
        \toprule
        \textbf{Model} & \textbf{Model Family} & \textbf{Regularization} & \textbf{Feature Selection} & \textbf{Interaction Handling} & \textbf{Interpretability} & \textbf{Comp. Cost} \\
        \midrule
        lm & Lin & None & No & Manual & High & Low \\
        glm & Lin & None & No & Manual & High & Low \\
        ridge & Lin & L2 & No & Manual & High & Low \\
        lasso & Lin & L1 & Yes & Manual & High & Low \\
        enet & Lin & L1 + L2 & Yes & Manual & High & Low \\
        penalized & Lin & L1/L2 & Yes & Manual & High & Low \\
        leapForward & Lin & None & Yes & Manual & High & Low \\
        leapBackward & Lin & None & Yes & Manual & High & Low \\
        lmStepAIC & Lin & None & Yes & Manual & High & Low \\
        pcr & Dim Reduct & None & No & Manual & Medium & Low \\
        pls & Latent Var & None & No & Manual & Medium & Low \\
        nnls & Constrained Lin & Non-negativity & No & Manual & High & Low \\
        spikeslab & Bayes & Prior & Yes & Manual & Medium & Medium \\
        blasso & Bayes & L1 Prior & Yes & Manual & Medium & Medium \\
        blassoAveraged & Bayes & L1 Prior & Yes & Manual & Medium & Medium \\
        bridge & Bayes & L2 Prior & No & Manual & Medium & Medium \\
        bayesglm & Bayes & Prior & No & Manual & Medium & Medium \\
        brnn & Neural Net & Prior & No & Automatic & Low & High \\
        rf & Ensemble/Tree & None & Yes & Automatic & Medium & Medium \\
        qrf & Ensemble/Tree & None & Yes & Automatic & Medium & Medium \\
        rpart2 & Tree & None & No & Automatic & High & Low \\
        cubist & Tree/Rule & None & No & Automatic & Medium & Medium \\
        blackboost & Boosted Trees & Shrinkage & No & Automatic & Medium & Medium \\
        bstTree & Boosted Trees & Shrinkage & No & Automatic & Medium & Medium \\
        gamboost & Boosted Additive & Shrinkage & Yes & Automatic & Medium & Medium \\
        glmboost & Boosted Lin & Shrinkage & Yes & Manual & Medium & Medium \\
        BstLm & Boosted Lin & Shrinkage & Yes & Manual & Medium & Medium \\
        bart & Bayes/Ensemble & Prior & Yes & Automatic & Medium & High \\
        rvmPoly & Kernel Method & Prior & Yes & Automatic & Low & High \\
        ppr & Projection & None & No & Automatic & Medium & Medium \\
        krlsPoly & Kernel Method & None & No & Automatic & Low & High \\
        rqlasso & Lin & L1 & Yes & Manual & Medium & Medium \\
        \bottomrule
    \end{tabular}
    \caption*{Notes: Lin (Linear), Dim Reduct (Dimensionality Reduction), Latent Var (Latent Variable), Bayes (Bayesian), Constrained Lin (Constrained Linear), Neural Net (Neural Network), Ensemble/Tree (Ensemble/Tree-Based), Tree/Rule (Tree-Based/Rule-Based), Boosted Trees (Boosted Tree-Based), Boosted Additive (Boosted Additive Models), Boosted Lin (Boosted Linear), Kernel Method (Kernel Method), Projection (Projection Pursuit).}
\end{table}
\end{landscape}

\section{PITE Metric Decompositions}\label{app: metrics}

\subsection{Proposition 1}\label{app: prop1}

\begin{proof}
By definition,
\[
\operatorname{Var}[\widehat{\operatorname{PITE}}(x)] = \operatorname{Var}[\hat{f}_1(x)] + \operatorname{Var}[\hat{f}_0(x)] - 2\, \operatorname{Cov}[\hat{f}_1(x), \hat{f}_0(x)].
\]
Since the MSE is the variance plus squared bias (bias-variance tradeoff), we write:
\[
\mathcal{R}_{\operatorname{PITE}} = \operatorname{Var}[\widehat{\operatorname{PITE}}(x)] + \operatorname{bias}_{\operatorname{PITE}}^2,
\]
where
\[
\operatorname{bias}_{\operatorname{PITE}} = \operatorname{bias}_1 - \operatorname{bias}_0.
\]
Expanding $\operatorname{bias}_{\operatorname{PITE}}^2$ and rearranging terms gives the stated result.
\end{proof}

We introduce the following notation to decompose PITE metrics and sampling variability:

\[
e_{1i} = f_{1}(x_i) - \hat{f}_{1}(x_i), \qquad 
e_{0i} = f_{0}(x_i) - \hat{f}_{0}(x_i),
\]

where $e_{1i}$ and $e_{0i}$ denote the estimation errors for the treatment and control predictive models, respectively. Similarly, define

\[
\epsilon_{1i} = f_{1}(x_i) - \bar{f}_{1}(x_i), \qquad 
\epsilon_{0i} = f_{0}(x_i) - \bar{f}_{0}(x_i),
\]

where $\bar{f}_{1}(x_i)$ and $\bar{f}_{0}(x_i)$ represent the expected predictive functions (e.g., under repeated sampling). Thus, $\epsilon_{1i}$ and $\epsilon_{0i}$ capture deviations from their expectation, isolating variability due to sampling.

\paragraph{PITE-$R^2$ Decomposition}
Consider
\[
R^2_{\text{PITE}} = 1 - \frac{\sum_{i=1}^n (PITE_i - \hat{PITE}_i)^2}{\sum_{i=1}^n (PITE_i - \bar{\text{PITE}})^2},
\]
and substituting \(PITE_i = f_{1i} - f_{0i}\) and \(\hat{PITE}_i = \hat{f}_{1i} - \hat{f}_{0i}\), the numerator becomes
\[
\sum_{i=1}^n (e_{1i} - e_{0i})^2 = \sum_{i=1}^n \big(e_{1i}^2 + e_{0i}^2 - 2 e_{1i} e_{0i} \big),
\]
and the denominator
\[
\sum_{i=1}^n (\epsilon_{1i} - \epsilon_{0i})^2 = \sum_{i=1}^n \big(\epsilon_{1i}^2 + \epsilon_{0i}^2 - 2 \epsilon_{1i} \epsilon_{0i} \big).
\]
Thus
\[
R^2_{\text{PITE}} = 1 - \frac{\mathrm{MSE}_1 + \mathrm{MSE}_0 - 2\,\mathrm{Cov}(e_1,e_0)}
{\mathrm{VAR}_1 + \mathrm{VAR}_0 - 2\,\mathrm{Cov}(\epsilon_1,\epsilon_0)},
\]
with $\mathrm{MSE}_t = \frac{1}{n}\sum e_{ti}^2$ and $\mathrm{VAR}_t = \frac{1}{n}\sum \epsilon_{ti}^2$ for $t \in \{0,1\}$.  
\begin{proof}

Assume that the Explained Variance (PITE-$R^2$) is given by:
$$
R^2_{PITE} = 1 - \frac{\sum_{i=1}^{n} (PITE_i - \hat{PITE}_i)^2}{\sum_{i=1}^{n} (PITE_i - \bar{PITE})^2}
$$
where  $PITE_i = f_{1i} - f_{0i}$ (true individual treatment effect),
      $\hat{PITE}_i = \hat{f}_{1i} - \hat{f}_{0i}$ (predicted treatment effect) and 
      $\bar{PITE} = \bar{f}_1 - \bar{f}_0$.

Using the identity $\hat{PITE}_i = \hat{f}_{1i} - \hat{f}_{0i}$ and $PITE_i = f_{1i} - f_{0i}$, we expand the numerator:

$$
(PITE_i - \hat{PITE}_i)^2 = \left[ (f_{1i} - \hat{f}_{1i}) - (f_{0i} - \hat{f}_{0i}) \right]^2 \\
= (e_{1i} - e_{0i})^2 = e_{1i}^2 + e_{0i}^2 - 2e_{1i}e_{0i},
$$

where $e_{1i} = f_{1i} - \hat{f}_{1i}$ and $e_{0i} = f_{0i} - \hat{f}_{0i}$. Therefore, the numerator becomes:

\begin{equation}
\sum_{i=1}^n (PITE_i - \hat{PITE}_i)^2 = \sum_{i=1}^n (e_{1i}^2 + e_{0i}^2 - 2e_{1i}e_{0i}).
\end{equation}

similar we work over 

$$
(PITE_i - \bar{PITE}_i)^2 = \left[ (f_{1i} - \bar{f}_{1i}) - (f_{0i} - \bar{f}_{0i}) \right]^2 \\
= (\epsilon_{1i} - \epsilon_{0i})^2 = \epsilon_{1i}^2 + \epsilon_{0i}^2 - 2\epsilon_{1i}\epsilon_{0i},
$$
where
      % $e_{1,i} = f_{1,i} - \hat{f}_{1,i}$
      % $e_{0,i} = f_{0,i} - \hat{f}_{0,i}$
      $\epsilon_{1,i} = f_{1,i} - \bar{f}_{1,i}$,
      $\epsilon_{0,i} = f_{0,i} - \bar{f}_{0,i}$

Finally, we obtain

$$
R^2_{PITE} = 1 - \frac{ \operatorname{MSE}_1 + \operatorname{MSE}_0 - 2\operatorname{Cov(e_1,e_0)}}{\operatorname{VAR}_1 + \operatorname{VAR}_0 - 2\operatorname{Cov(\epsilon_1,\epsilon_0)}}
$$

where $\operatorname{MSE}_1 = \frac{1}{n}\sum_{i=1}^{n}e_{1,i}^2$,
      $\operatorname{MSE}_0 = \frac{1}{n}\sum_{i=1}^{n}e_{0,i}^2$,
      $\operatorname{Cov}_e = \frac{1}{n}\sum_{i=1}^{n}e_{1,i}e_{0,i}$,
      $\operatorname{VAR}_1 = \frac{1}{n}\sum_{i=1}^{n}\epsilon_{1,i}^2$,
      $\operatorname{VAR}_0 = \frac{1}{n}\sum_{i=1}^{n}\epsilon_{0,i}^2$,
      $\operatorname{Cov}_f = \frac{1}{n}\sum_{i=1}^{n}\epsilon_{1,i}\epsilon_{0,i}$.
\end{proof}

Note that, PITE-$R^2$ depends not only on the individual outcome model $R^2$ values but also on $\mathrm{Cov}(e_1,e_0)$: positively correlated errors can improve $R^2_{\text{PITE}}$.

\paragraph{PITE-MAE Decomposition}
Based on
\[
\mathrm{MAE}_{\text{PITE}} = \frac{1}{n} \sum_{i=1}^n |e_{1i} - e_{0i}|,
\]
the triangle and reverse triangle inequalities give
\[
|\mathrm{MAE}_{f_1} - \mathrm{MAE}_{f_0}| \leq \mathrm{MAE}_{\text{PITE}} \leq \mathrm{MAE}_{f_1} + \mathrm{MAE}_{f_0},
\]
where $\mathrm{MAE}_{f_t} = \frac{1}{n} \sum_{i=1}^n |e_{ti}|$.  

\begin{proof}
\begin{equation}
MAE_{PITE} = \frac{1}{n} \sum_{i=1}^{n} \left| (f_{1i} - f_{0i}) - (\hat{f}_{1i} - \hat{f}_{0i}) \right| = \frac{1}{n} \sum_{i=1}^{n} |e_{1i} - e_{0i}|.
\end{equation}

{Upper Bound from triangle inequality}

$$
|e_{1i} - e_{0i}| \leq |e_{1i}| + |e_{0i}|,
$$

Lower bound from reverse triangle inequality:

$$
|e_{1i} - e_{0i}| \geq ||e_{1i}| - |e_{0i}||,
$$
Apply obth bounds across all observations:

$$
\frac{1}{n}\sum_{i=1}^n ||e_{1i}| - |e_{0i}|| \leq \frac{1}{n}\sum_{i=1}^n|e_{1i} - e_{0i}| \leq \frac{1}{n}\sum_{i=1}^n(|e_{1i}| + |e_{0i}|),
$$

resulting

$$
\frac{1}{n}\sum_{i=1}^n ||e_{1i}| - |e_{0i}|| \leq \operatorname{MAE}_{PITE} \leq \operatorname{MAE}_{f_1} + \operatorname{MAE}_{f_0},
$$

using Jensen and triangular inequality 
$$
||e_{1i}| - |e_{0i}|| \geq |\operatorname{MAE}_{f_1} - \operatorname{MAE}_{f_0}|
$$
we obtain 

$$
|\operatorname{MAE}_{f_1} - \operatorname{MAE}_{f_0}| \leq \operatorname{MAE}_{PITE} \leq \operatorname{MAE}_{f_1} + \operatorname{MAE}_{f_0},
$$

so

\begin{equation}
MAE_{PITE} \leq MAE_{f_1} + MAE_{f_0},
\end{equation}

where $MAE_{f_t} = \frac{1}{n} \sum_{i=1}^{n} |f_{ti} - \hat{f}_{ti}|$ for $t \in \{0,1\}$.
\end{proof}

This shows that the upper bound is tight when $e_1$ and $e_0$ have the same sign; the error can be much smaller if errors cancel.

\paragraph{PITE Calibration Decomposition}
Calibration regresses $PITE_i$ on $\hat{PITE}_i$:
\[
PITE_i = \alpha + \beta\,\hat{PITE}_i + \eta_i.
\]
If each outcome model satisfies
\[
f_{ti} = \alpha_t + \beta_t \hat{f}_{ti} + \varepsilon_{ti}, \quad t\in\{0,1\},
\]
then
\[
PITE_i = (\alpha_1 - \alpha_0) + \beta_1 \hat{f}_{1i} - \beta_0 \hat{f}_{0i} + (\varepsilon_{1i} - \varepsilon_{0i}).
\]
PITE calibration corresponds to $\alpha = \alpha_1 - \alpha_0$, $\beta_1 = \beta_0 = \beta$, and $\eta_i = \varepsilon_{1i} - \varepsilon_{0i}$.  
\begin{proof}
If both outcome models are well-calibrated:
\[
\begin{aligned}
f_{1i} &= \alpha_1 + \beta_1 \hat{f}_{1i} + \varepsilon_{1i}, \\
f_{0i} &= \alpha_0 + \beta_0 \hat{f}_{0i} + \varepsilon_{0i},
\end{aligned}
\]
then the treatment effect is:
\[
PITE_i = f_{1i} - f_{0i} = (\alpha_1 - \alpha_0) + \beta_1 \hat{f}_{1i} - \beta_0 \hat{f}_{0i} + (\varepsilon_{1i} - \varepsilon_{0i}).
\]

For this to be equivalent to:
\[
PITE_i = \alpha + \beta \hat{PITE}_i + \eta_i,
\]
we require:
\[
\boxed{
\alpha = \alpha_1 - \alpha_0, \quad \beta_1 = \beta_0 = \beta, \quad \eta_i = \varepsilon_{1i} - \varepsilon_{0i}.
}
\]
\end{proof}

Note that perfect calibration in $f_1$ and $f_0$ does not ensure PITE calibration unless slopes match and intercepts differ appropriately.

Some relationship used. Let
\[
e_{1i} = f_1{(x_i)} - \hat{f}_1{{(x_i)}}, \quad e_{0i} = f_0{{(x_i)}} - \hat{f}_0{(x_i)}, \quad 
\epsilon_{1i} = f_1{(x_i)} - \bar{f}_1{(x_i)}, \quad \epsilon_0{(x_i)} = f_0{(x_i)} - \bar{f}_0{(x_i)}.
\]

\paragraph{Empirical adequate diagnostic metrics evaluation}
%%%%%%%%%%%%%%%%%%%%%%%%%%%%%%%%%%%%%%%%%
%%% interaction and External validation
%%%%%%%%%%%%%%%%%%%%%%%%%%%%%%%%%%%%%%%%%

To analyze the main contribution of the diagnostic metrics a principal component analysis explain 67\% of the variability in two components showing that PC1 (standard deviation = 1.67) is predominantly influenced by Rsquared, DIR, and $\alpha$, reflecting their strong contribution to the main variation (variability) in the diagnostic metrics information. PC2 (standard deviation = 1.12) is primarily defined by RMSE and MAE, indicating a dimension related to error metrics.

\begin{figure}[h]
    \centering
    \includegraphics[width=1\linewidth]{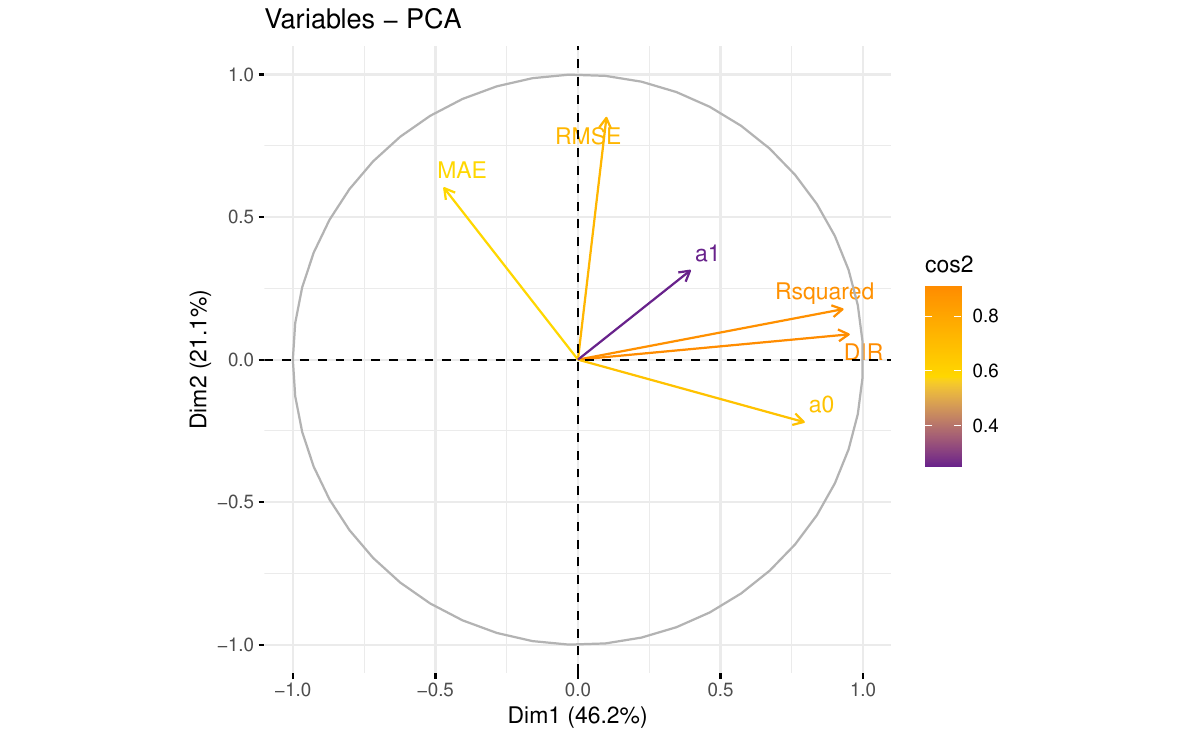}\\
    \includegraphics[width=1\linewidth]{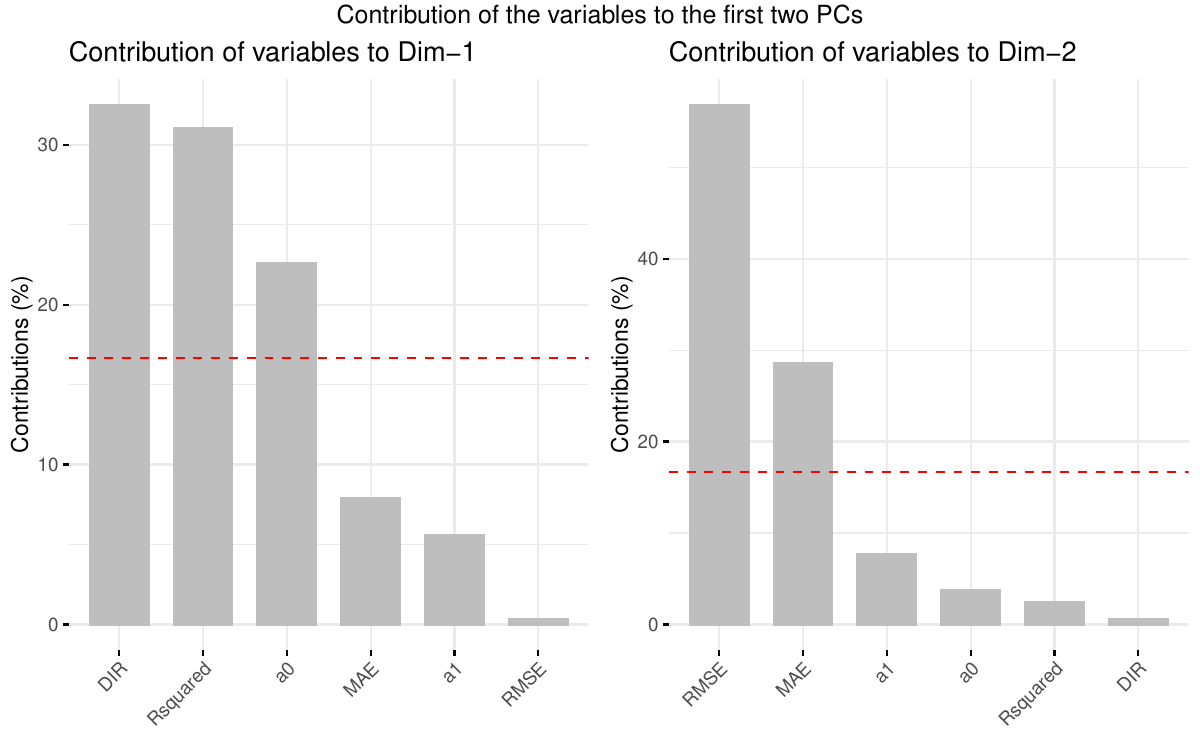}
    \caption{Enter Caption}
    \label{fig:enter-label}
\end{figure}

\section{Simulation}\label{app:sim_details}
\subsection*{Software and Model Implementation}

Analyses were conducted in R version 4.3.1~\cite{Rsoftware} on a Windows platform (x86\_64-w64-mingw32, ucrt). R packages used for data manipulation, modeling, and visualization, including \texttt{zeallot}~\cite{zeallot}, \texttt{readxl}~\cite{readxl}, \texttt{MatchIt}~\cite{MatchIt}, \texttt{dplyr}~\cite{dplyr}, \texttt{tidyr}~\cite{tidyr}, \texttt{caret}~\cite{caret}, \texttt{pls}~\cite{pls}, \texttt{lme4}~\cite{lme4}, \texttt{mboost}~\cite{mboost}, \texttt{bst}~\cite{bst}, \texttt{monomvn}~\cite{monomvn}, \texttt{randomForest}~\cite{randomForest}, \texttt{party}~\cite{party}, \texttt{Cubist}~\cite{Cubist}, \texttt{elasticnet}~\cite{elasticnet}, \texttt{leaps}~\cite{leaps}, \texttt{penalized}~\cite{penalized}, \texttt{KRLS}~\cite{KRLS}, \texttt{quantregForest}~\cite{quantregForest}, \texttt{spikeslab}~\cite{spikeslab}, \texttt{arm}~\cite{arm}, \texttt{brnn}~\cite{brnn}, \texttt{ranger}~\cite{ranger}, \texttt{Bayenet}~\cite{Bayenet}, \texttt{BART}~\cite{BART}, \texttt{glmnet}~\cite{glmnet}, \texttt{glmnetUtils}~\cite{glmnetUtils}, \texttt{ggplot2}~\cite{ggplot2}, and \texttt{tidyverse}~\cite{tidyverse}.

\paragraph{General Training Procedure:}  
Models were trained separately on treated and control subsets. For methods implemented via \texttt{caret::train}, we employed 10-fold cross-validation (\texttt{trainControl(method = "cv", number = 10)}), and predictions on the test set were obtained using \texttt{predict(fit, newdata=test)}. 
\paragraph{Bayesian Additive Regression Trees (BART):}  
BART models were fit using \texttt{wbart}, with 50 burn-in iterations (\texttt{nskip = 50}) and 200 posterior draws (\texttt{ndpost = 200}). Posterior predictive means (\texttt{yhat.test.mean}) were used for outcome predictions, allowing flexible modeling of nonlinearities and interactions.

\paragraph{Random Forest (RF):}  
Random Forest models were implemented using \texttt{ranger} with 500 trees and 3 parallel threads (\texttt{num.threads = 3L}). Predictions were computed as the average over all trees.

\paragraph{Bayesian Lasso Network (BayeNet):}  
BayeNet models were fit with \texttt{max.steps = 500} and \texttt{penalty = "lasso"}. Predictions were obtained using the posterior predictive mean via \texttt{predict(fit, X, clin)}.

\paragraph{Elastic Net and Variants (\texttt{glmnet}):}  
Models were fit using \texttt{cva.glmnet} with cross-validated selection of the regularization parameter (\(\lambda\)). L1 (lasso, \texttt{alpha = 1.0}), L2 (ridge, \texttt{alpha = 0}), and mixed (elastic net, \texttt{alpha = 0.2}) penalties were used. Predictions were obtained via \texttt{predict(fit, newx)}.

\subsection{External-correlations}

RMSE ranged from 1.34 to 22.23 (mean = 1.84, median = 1.56), indicating generally low error with occasional large outliers. DIR spanned from 0.16 to 0.98 (mean = 0.69, median = 0.70), suggesting good directional performance in most cases. R² values ranged from -0.007 to 0.995 (mean = 0.43, median = 0.35), confirming strong performance in some settings but poor fit in others. MAE ranged from 1.05 to 17.68 with a median of 1.25 . Calibration metrics were also dispersed: $a_0$ ranged from 0 to 1 (median = 0.60), and $a_1$ from 0 to 0.8 (median = 0), indicating common underestimation of treatment effects. These patterns confirm substantial variability and the presence of outliers across metrics.

% \begin{verbatim}
% RMSE             adjRsquared            MAE
% Min.   : 1.337   Min.   :-0.00667   Min.   : 1.046
% 1st Qu.: 1.468   1st Qu.: 0.03796   1st Qu.: 1.175
% Median : 1.564   Median : 0.35256   Median : 1.247
% Mean   : 1.835   Mean   : 0.42946   Mean   : 1.463
% 3rd Qu.: 1.870   3rd Qu.: 0.76893   3rd Qu.: 1.495
% Max.   :22.229   Max.   : 0.99507   Max.   :17.676
%                  NA s   :14
%       DIR               a               b
%  Min.   :0.1610   Min.   :0.0000   Min.   :0.00000
%  1st Qu.:0.5103   1st Qu.:0.4000   1st Qu.:0.00000
%  Median :0.6983   Median :0.6000   Median :0.00000
%  Mean   :0.6941   Mean   :0.5976   Mean   :0.04766
%  3rd Qu.:0.8550   3rd Qu.:0.8000   3rd Qu.:0.00000
%  Max.   :0.9783   Max.   :1.0000   Max.   :0.80000

% \end{verbatim}

\begin{table}[h!]
    \centering
    \caption{Summary statistics for model performance metrics.}
    \label{tab:summary_stats}
    \begin{tabular}{l|c|c|c|c|c|c}
        \hline
        & \textbf{RMSE} & \textbf{adjRsquared} & \textbf{MAE} & \textbf{DIR} & $\alpha$ & $\beta$\\
        \hline
        \textbf{Min.} & 1.337 & -0.00667 & 1.046 & 0.1610 & 0.0000 & 0.00000 \\
        \textbf{1st Qu.} & 1.468 & 0.03796 & 1.175 & 0.5103 & 0.4000 & 0.00000 \\
        \textbf{Median} & 1.564 & 0.35256 & 1.247 & 0.6983 & 0.6000 & 0.00000 \\
        \textbf{Mean} & 1.835 & 0.42946 & 1.463 & 0.6941 & 0.5976 & 0.04766 \\
        \textbf{3rd Qu.} & 1.870 & 0.76893 & 1.495 & 0.8550 & 0.8000 & 0.00000 \\
        \textbf{Max.} & 22.229 & 0.99507 & 17.676 & 0.9783 & 1.0000 & 0.80000 \\
        \textbf{NA's} & & 14 & & & & \\
        \hline
    \end{tabular}
\end{table}

\subsection{External-interactions}

MAE of 1.37. Adjusted R² values were low overall (mean = 0.22), and in some cases negative, reflecting poor model generalization. Moreover, the personalization component was weak, with negligible $a_1$ values and mean $a_0=0.17$, suggesting limited treatment effect heterogeneity capture.

% \begin{verbatim}
%      RMSE        adjRsquared             MAE             DIR        
%  Min.   :1.416   Min.   :-0.003479   Min.   :1.134   Min.   :0.4600  
%  1st Qu.:1.498   1st Qu.: 0.002966   1st Qu.:1.199   1st Qu.:0.5108  
%  Median :1.582   Median : 0.147590   Median :1.261   Median :0.5916  
%  Mean   :1.774   Mean   : 0.221663   Mean   :1.374   Mean   :0.5901  
%  3rd Qu.:1.851   3rd Qu.: 0.370098   3rd Qu.:1.461   3rd Qu.:0.6612  
%  Max.   :4.640   Max.   : 0.756603   Max.   :2.697   Max.   :0.7324  
%                  NA's   :1                                           
%        a               b   
%  Min.   :0.0000   Min.   :0  
%  1st Qu.:0.0000   1st Qu.:0  
%  Median :0.0000   Median :0  
%  Mean   :0.1688   Mean   :0  
%  3rd Qu.:0.2000   3rd Qu.:0  
%  Max.   :1.0000   Max.   :0      
% \end{verbatim} 

\begin{table}[h!]
    \centering
    \caption{Summary statistics for model performance metrics.}
    \label{tab:new_summary_stats}
    \begin{tabular}{l|c|c|c|c|c|c}
        \hline
        & \textbf{RMSE} & \textbf{adjRsquared} & \textbf{MAE} & \textbf{DIR} & $\alpha$ & $\beta$ \\
        \hline
        \textbf{Min.} & 1.416 & -0.003479 & 1.134 & 0.4600 & 0.0000 & 0 \\
        \textbf{1st Qu.} & 1.498 & 0.002966 & 1.199 & 0.5108 & 0.0000 & 0 \\
        \textbf{Median} & 1.582 & 0.147590 & 1.261 & 0.5916 & 0.0000 & 0 \\
        \textbf{Mean} & 1.774 & 0.221663 & 1.374 & 0.5901 & 0.1688 & 0 \\
        \textbf{3rd Qu.} & 1.851 & 0.370098 & 1.461 & 0.6612 & 0.2000 & 0 \\
        \textbf{Max.} & 4.640 & 0.756603 & 2.697 & 0.7324 & 1.0000 & 0 \\
        \textbf{NA's} & & 1 & & & & \\
        \hline
    \end{tabular}
\end{table}

\subsection{Internal correlated}
The median adjusted R² of 0.81 and the low median MAE (0.40) indicate that most models accurately captured individual treatment effects under these controlled, internal conditions. Moreover, the moderate average values of $\alpha = 0.33$ and $\beta = 0.22$ reflect partial personalization, suggesting that some models leveraged patient-level heterogeneity in treatment assignment.

% \begin{verbatim}

%       RMSE           adjRsquared             DIR              MAE         
%  Min.   : 0.02567   Min.   :-0.006055   Min.   :0.0224   Min.   : 0.0206  
%  1st Qu.: 0.32037   1st Qu.: 0.200436   1st Qu.:0.5376   1st Qu.: 0.2590  
%  Median : 0.50757   Median : 0.806061   Median :0.8448   Median : 0.4048  
%  Mean   : 0.87941   Mean   : 0.622523   Mean   :0.7635   Mean   : 0.6974  
%  3rd Qu.: 0.87909   3rd Qu.: 0.980559   3rd Qu.:0.9560   3rd Qu.: 0.6934  
%  Max.   :21.99891   Max.   : 0.999999   Max.   :1.0000   Max.   :17.5477  
%                     NA's   :16                                            
%        a               b        
%  Min.   :0.0000   Min.   :0.0000  
%  1st Qu.:0.0000   1st Qu.:0.0000  
%  Median :0.2000   Median :0.2000  
%  Mean   :0.3281   Mean   :0.2175  
%  3rd Qu.:0.6000   3rd Qu.:0.4000  
%  Max.   :1.0000   Max.   :1.0000  
     
% \end{verbatim}

\begin{table}[h!]
    \centering
    \caption{Summary statistics for model performance metrics.}
    \label{tab:latest_summary_stats}
    \begin{tabular}{l|c|c|c|c|c|c}
        \hline
        & \textbf{RMSE} & \textbf{adjRsquared} & \textbf{DIR} & \textbf{MAE} & $\alpha$ & $\beta$ \\
        \hline
        \textbf{Min.} & 0.02567 & -0.006055 & 0.0224 & 0.0206 & 0.0000 & 0.0000 \\
        \textbf{1st Qu.} & 0.32037 & 0.200436 & 0.5376 & 0.2590 & 0.0000 & 0.0000 \\
        \textbf{Median} & 0.50757 & 0.806061 & 0.8448 & 0.4048 & 0.2000 & 0.2000 \\
        \textbf{Mean} & 0.87941 & 0.622523 & 0.7635 & 0.6974 & 0.3281 & 0.2175 \\
        \textbf{3rd Qu.} & 0.87909 & 0.980559 & 0.9560 & 0.6934 & 0.6000 & 0.4000 \\
        \textbf{Max.} & 21.99891 & 0.999999 & 1.0000 & 17.5477 & 1.0000 & 1.0000 \\
        \textbf{NA's} & & 16 & & & & \\
        \hline
    \end{tabular}
\end{table}

\end{document}